\newtheorem{lem}{Lemma}[section]
\newtheorem{thm}[lem]{Theorem}
\title{Scheduling Coflows with Precedence Constraints for Minimizing the Total Weighted Completion Time in Identical Parallel Networks}
\author{Chi-Yeh~Chen 
\\ Department of Computer Science and Information
Engineering, \\ National Cheng Kung University, \\
Taiwan, ROC. \\
chency@csie.ncku.edu.tw.}
\begin{document}

\maketitle
\begin{abstract}
Coflow is a recently proposed network abstraction for data-parallel computing applications. This paper considers scheduling coflows with precedence constraints in identical parallel networks, such as to minimize the total weighted completion time of coflows. The identical parallel network is an architecture based on multiple network cores running in parallel. In the divisible coflow scheduling problem, the proposed algorithm achieves $(6-\frac{2}{m})\mu$ and $(5-\frac{2}{m})\mu$ approximate ratios for arbitrary release time and zero release time, respectively, where $m$ is the number of network cores and $\mu$ is the coflow number of the longest path in the precedence graph. In the indivisible coflow scheduling problem, the proposed algorithm achieves $(4m+1)\mu$ and $4m\mu$ approximate ratios for arbitrary release time and zero release time, respectively. In the single network core scheduling problem, we propose a $5\mu$-approximation algorithm with arbitrary release times, and a $4\mu$-approximation without release time. Moreover, the proposed algorithm can be modified to solve the coflows of multi-stage jobs scheduling problem. In multi-stage jobs, coflow is transferred between servers to enable starting of next stage. This means that there are precedence constraints between coflows of job. Our result represents an improvement upon the previous best approximation ratio of $O(\tilde{\mu} \log(N)/ \log(\log(N)))$ where $\tilde{\mu}$ is the maximum number of coflows in a job and $N$ is the number of servers.

\begin{keywords}
Scheduling algorithms, approximation algorithms, coflow, multi-stage job, datacenter network, identical parallel network.
\end{keywords}
\end{abstract}

\section{Introduction}\label{sec:Introduction}
Due to growing computing demands, large data centers have become critical to cloud computing.
In large data centers, the benefits of application-aware network scheduling have been demonstrated by structured traffic patterns for distributed applications~\cite{Chowdhury2014, Chowdhury2015, Zhang2016, Agarwal2018}.
Many data-parallel computing applications such as MapReduce~\cite{Dean2008}, Hadoop~\cite{Shvachko2010, borthakur2007hadoop}, Dryad~\cite{isard2007dryad} and Spark~\cite{zaharia2010spark} have been successfully used by many users. Therefore, this has led to a proliferation of related applications~\cite {dogar2014decentralized, chowdhury2011managing}.
Data-parallel computing applications generate a lot of intermediate data (flows) during the computing stage.
These flows need to be transmitted through different machines during the communication stage for further processing.
Due to the large amount of data flow transmission requirements caused by numerous applications, the data center must have sufficient data transmission and scheduling capabilities.
Looking at the communication patterns of data-parallel computing applications, the interaction of all flows between two sets of machines becomes important.
This collective communication pattern in the data center is abstracted by coflow traffic~\cite{Chowdhury2012}.

This paper considers an architecture based on multiple identical network cores running in parallel. The goal is to schedule coflows with precedence constraints in the identical parallel networks such that the total weighted coflow completion time is minimized. 
In the identical parallel networks, the coflow can be considered as divisible and indivisible.
In the divisible coflow scheduling problem, the flows in a coflow can be distributed in different network cores. 
However, data in a flow is only distributed to the same core. 
In the indivisible coflow scheduling problem, flows in a coflow can only be distributed in the same network core.
This paper then modifies the proposed algorithm to solve the multi-stage jobs scheduling problem. 
A multi-stage job contains multiple coflows with precedence constraints.
Scheduling the coflows of multi-stage job problem requires minimizing the total weighted completion time and makespan of the job.

\subsection{Related Work}
Chowdhury and Stoica~\cite{Chowdhury2012} first introduced the coflow abstraction to describe communication patterns in data centers.
The coflow scheduling problem has been shown to be strongly NP-hard by a reduction of the well-studied concurrent open shop scheduling problem~\cite{chen2007supply, garg2007order, leung2007scheduling, mastrolilli2010minimizing, wang2007customer}.
Therefore, this problem requires efficient approximate algorithms rather than exact algorithms.
The coflow scheduling problem is NP-hard to approximate within a factor better than $2-\epsilon$ due to the inapproximability of concurrent open shop problem~\cite{Sachdeva2013, shafiee2018improved, ahmadi2020scheduling, Bansal2010, Sachdeva2013}.
Since the coflow abstraction was proposed, many relevant investigations have been carried out for scheduling coflows~\cite{Chowdhury2014, Chowdhury2015, Qiu2015, zhao2015rapier, shafiee2018improved, ahmadi2020scheduling}.
Qiu~\textit{et al.}~\cite{Qiu2015} proposed the first polynomial-time deterministic approximation algorithm.
Since then, the best approximation ratio achievable within polynomial time was improved from $\frac{64}{3}$ to 4 when coflow is released at zero, and from $\frac{76}{3}$ to 5 when coflow is released at arbitrary time~\cite{Qiu2015, khuller2016brief, shafiee2018improved, ahmadi2020scheduling}.
In the single coflow scheduling problem in a heterogeneous parallel network, Huang \textit{et al.}~\cite{Huang2020} proposed an $O(m)$-approximation algorithm, where $m$ is the number of network cores. When each job has multiple coflows with precedence constraints, Shafiee and Ghaderi~\cite{shafiee2021scheduling} proposed a polynomial-time algorithm with approximation ratio of $O(\tilde{\mu} \log(N)/\log(\log(N)))$, where $\tilde{\mu}$ is the maximum number of coflows in a job and $N$ is the number of servers.

\subsection{Our Contributions}
This paper considers the scheduling coflows with precedence constraints problem in identical parallel networks. Our results are as follows:
\begin{itemize}
\item In the divisible coflow scheduling problem, we first propose a $(6-\frac{2}{m})\mu$-approximation algorithm with arbitrary release times, and a $(5-\frac{2}{m})\mu$-approximation without release time, where $m$ is the number of network cores and $\mu$ is the coflow number of the longest path in the precedence graph. 

\item When coflow is indivisible, we propose a $(4m+1)\mu$-approximation algorithm with arbitrary release times, and a $4m\mu$-approximation without release time.
\item In the single network core scheduling problem, we propose a $5\mu$-approximation algorithm with arbitrary release times, and a $4\mu$-approximation without release time.
\item In the coflows of multi-stage jobs scheduling problem, our algorithms achieve $O(\mu)$ approximate ratio for minimizing the total weighted completion time and makespan.
\end{itemize}
\subsection{Organization}
The rest of this article is organized as follows. Section \ref{sec:Preliminaries} introduces basic notations and preliminaries. Section \ref{sec:Algorithm1} presents an algorithm for divisible coflow scheduling. Section \ref{sec:Algorithm2} presents an algorithm for indivisible coflow scheduling. Section \ref{sec:Algorithm3} presents an algorithm for the single network core scheduling. Section \ref{sec:Algorithm4} modifies our algorithms to solve the coflows of multi-stage jobs scheduling problem. Section \ref{sec:Conclusion} draws conclusions.

\section{Notation and Preliminaries}\label{sec:Preliminaries}
Given a set $M$ of $N \times N$ non-blocking switches and a set $\mathcal{K}$ of coflows, the scheduling coflows with precedence constraints problem asks for a minimum total weighted coflow completion time. This paper considers the identical parallel network which is an architecture based on multiple identical network cores (or switch) running in parallel. Let $m$ be the number of network cores. Each network core has $N$ input links connected to $N$ source servers and $N$ output links connected to $N$ destination servers. For each network core, the $i$-th input (or $j$th output) port is connected to the $i$-th source server (or $j$-th destination server).
Therefore, each source server (or destination server) has $m$ simultaneous uplinks (or downlinks). Each uplink (or downlink) can be a bundle of multiple physical links in the actual topology~\cite{Huang2020}. 
Let $\mathcal{I}$ be the source server set and $\mathcal{J}$ be the destination server set. 
For simplicity, we assume that all links in each network core have the same capacity (or the same speed).

A coflow is defined as a set of independent flows whose completion time is determined by the completion time of the latest flow in the set.
We can use a $N\times N$ demand matrix $D^{(k)}=\left(d_{ijk}\right)_{i,j=1}^{N}$ to express the coflow $k$ where $d_{ijk}$ denote the size of the flow to be transferred from input $i$ to output $j$ in coflow $k$. We also can use a triple $(i, j, k)$ to express a flow in which $i \in \mathcal{I}$ is its source server and $j \in \mathcal{J}$ is its destination server, $k$ is the coflow to which it belongs.
For simplicity, we assume that all flows in a coflow arrive at the system at the same time. We also assume that flows consist of discrete data units, so their sizes are integers (as shown in~\cite{Qiu2015}). 

Let $\mathcal{K}$ be a set of coflows and let $r_k$ be the released time of coflow $k$ for $k = 1, 2, \ldots, |\mathcal{K}|$. 
The completion time of coflow $k$ is denoted by $C_k$. Let $w_{k}$ be the weight of coflow $k$.
There are precedence constraints among the coflows which are represented by a directed acyclic graph (DAG) $G=(\mathcal{K}, E)$.
For the case of arc $(k', k)\in E$ and $k', k\in \mathcal{K}$, all flows of coflow $k'$ should be done before we start scheduling any flow of coflow $k$.
In this case, we say coflow $k'$ precedes coflow $k$, and denote it by $k'\prec k$. Let $\mu$ be the coflow number of the longest path in the DAG.
The goal is to schedule coflows with precedence constraints in an identical parallel network such that the total weighted coflow completion time $\sum_{k\in \mathcal{K}} w_{k}C_{k}$ is minimized. 
Table~\ref{tab:notations} presents the notation and terminology that are used herein.
\begin{table}[ht]
\caption{Notation and Terminology}
\vspace{2mm}
    \centering
        \begin{tabular}{||c|p{4in}||}
    \hline
     $m$      & The number of network cores.          \\
    \hline    
     $N$      & The number of input/output ports.         \\
    \hline
     $\mathcal{I}, \mathcal{J}$ & The source server set and the destination server set.         \\
    \hline    
     $\mathcal{K}$ & The set of coflows.         \\
    \hline
     $D^{(k)}$     & The demand matrix of coflow $k$. \\
    \hline    
     $d_{ijk}$     & The size of the flow to be transferred from input $i$ to output $j$ in coflow $k$.   \\
    \hline     
     $C_k$     & The completion time of coflow $k$.   \\
    \hline     
     $C_{ijk}$ & The completion time of flow $(i, j, k)$. \\
    \hline     
     $r_k$     & The released time of coflow $k$. \\
    \hline     
     $w_{k}$   & The weight of coflow $k$. \\
    \hline     
		$\mu$      & The coflow number of the longest path in the DAG. \\
		\hline 
		 $\bar{C}_1, \ldots, \bar{C}_n$ & An optimal solution to the linear program. \\
		\hline 
		 $\hat{C}_{1}, \ldots, \hat{C}_{n}$ & The length of time interval between ready time and completion time for each coflow. \\
		\hline 		
		$\tilde{C}_{1}, \ldots, \tilde{C}_{n}$ & The schedule solution to our algorithm. \\
		\hline 
        \end{tabular}
    \label{tab:notations}
\end{table}


\section{Approximation Algorithm for Divisible Coflow Scheduling}\label{sec:Algorithm1}
This section consideres the divisible coflow scheduling problem.  
The different flows in a divisible coflow can be transmitted through different cores. However, data in a flow is only distributed to the same core. 
Let $\mathcal{K}_{i}=\left\{(k, j)| d_{ijk}>0, \forall k\in \mathcal{K}, \forall j\in \mathcal{J} \right\}$ be the set of flows in which the flows need to transmit through input port $i$.
Let $\mathcal{K}_{j}=\left\{(k, i)| d_{ijk}>0, \forall k\in \mathcal{K}, \forall i\in \mathcal{I} \right\}$ be the set of flows in which the flows need to transmit through output port $j$.
For any subset $S_{i}\subseteq \mathcal{K}_{i}$ (or $S_{j}\subseteq \mathcal{K}_{j}$), let $d(S_{i})=\sum_{(k, j)\in S_{i}} d_{ijk}$ (or $d(S_{j})=\sum_{(k, i)\in S_{j}} d_{ijk}$) and $d^2(S_{i})=\sum_{(k, j)\in S_{i}} d_{ijk}^{2}$ (or $d^2(S_{j})=\sum_{(k, i)\in S_{j}} d_{ijk}^{2}$). 
Let $C_{k}$ be the completion time of coflow $k$ and let $C_{ijk}$ be the completion time of flow $(i, j, k)$.
We can formulate our problem as the following linear programming relaxation:
\begin{subequations}\label{coflow:main}
\begin{align}
& \text{min}  && \sum_{k \in \mathcal{K}} w_{k} C_{k}     &   & \tag{\ref{coflow:main}} \\
& \text{s.t.} && C_{k} \geq C_{ijk}, && \forall k\in \mathcal{K}, \forall i\in \mathcal{I}, \forall j\in \mathcal{J} \label{coflow:a} \\
&  && C_{ijk}\geq r_k+d_{ijk}, && \forall k\in \mathcal{K}, \forall i\in \mathcal{I}, \forall j\in \mathcal{J} \label{coflow:b} \\
&  && C_{ijk}\geq C_{k'}+d_{ijk}, && \forall k, k'\in \mathcal{K}, \forall i\in \mathcal{I}, \forall j\in \mathcal{J}: k'\prec k\label{coflow:e} \\
&  && \sum_{(k, j)\in S_{i}}d_{ijk}C_{ijk}\geq \frac{1}{2m} \left(d(S_{i})^2+d^2(S_{i})\right),&& \forall i\in \mathcal{I}, \forall S_{i}\subseteq \mathcal{K}_{i} \label{coflow:c} \\
&  && \sum_{(k, i)\in S_{j}}d_{ijk}C_{ijk} \geq \frac{1}{2m} \left(d(S_{j})^2+d^2(S_{j})\right),&& \forall j\in \mathcal{J}, \forall S_{j}\subseteq \mathcal{K}_{j} \label{coflow:d} 
\end{align}
\end{subequations}

The constraint~(\ref{coflow:a}) ensures that the completion time of coflow $k$ is bounded by all its flows. 
The constraint~(\ref{coflow:b}) ensures that the completion time of any flow $(i, j, k)$ is at least its release time $r_k$ plus its load. 
The constraint~(\ref{coflow:e}) is the precedence constraint.
The constraints~(\ref{coflow:c}) and (\ref{coflow:d}) are modified from the scheduling literature~\cite{Leslie1997} to lower bound the completion time variable in the input port and the output port respectively. 

Our algorithm flow-driven-list-scheduling (described in Algorithm~\ref{Alg1}) is modified from our previous algorithm~\cite{Chency2022}.
Given $n$ flows from all coflows, we have an optimal solution $\bar{C}_1, \ldots, \bar{C}_n$ from the linear program (\ref{coflow:main}).
Without loss of generality, we assume $\bar{C}_{1}\leq \cdots\leq \bar{C}_{n}$ and schedule the flows iteratively in the order of this list. 
For each flow $f$, we find a network core $h$ such that the complete time of $f$ in this network core is minimized (lines 5-9). 
A flow is called to be \textit{ready} for scheduling if all predecessors of the flow have been fully transmitted.
Lines 10-22 schedule every ready, released and incomplete flows. This procedure (lines 10-22) is modified from Shafiee and Ghaderi's algorithm~\cite{shafiee2018improved}.

\begin{algorithm}
\caption{flow-driven-list-scheduling}
    \begin{algorithmic}[1]
		    \REQUIRE a vector $\bar{C}\in \mathbb{R}_{\scriptscriptstyle \geq 0}^{n}$ used to decide the order of scheduling
				\STATE let $load_{I}(i,h)$ be the load on the $i$-th input port of the network core $h$
				\STATE let $load_{O}(j,h)$ be the load on the $j$-th output port of the network core $h$
				\STATE let $\mathcal{A}_h$ be the set of flows allocated to network core $h$
				\STATE both $load_{I}$ and $load_{O}$ are initialized to zero and $\mathcal{A}_h=\emptyset$ for all $h\in [1, m]$
				\FOR{every flow $f=(i, j, k)$ in non-decreasing order of $\bar{C}_f$, breaking ties arbitrarily}
				    \STATE $h^*=\arg \min_{h\in [1, m]}\left(load_{I}(i,h)+load_{O}(j,h)\right)$
						\STATE $\mathcal{A}_{h^*}=\mathcal{A}_{h^*}\cup \left\{f\right\}$
						\STATE $load_{I}(i,h^*)=load_{I}(i,h^*)+d_f$ and $load_{O}(j,h^*)=load_{O}(j,h^*)+d_f$
				\ENDFOR
				\FOR{each $h\in [1, m]$ do in parallel}
				    \STATE wait until the first coflow is released
						\WHILE{there is some incomplete flow}
                \FOR{every ready, released and incomplete flow $f=(i, j, k)\in \mathcal{A}_{h}$ in non-decreasing order of $\bar{C}_f$, breaking ties arbitrarily}
										\IF{the link $(i, j)$ is idle}
										    \STATE schedule flow $f$
										\ENDIF
								\ENDFOR
								\WHILE{no new flow is ready, completed or released}
								    \STATE transmit the flows that get scheduled in line 15 at maximum rate 1.
								\ENDWHILE
						\ENDWHILE
				\ENDFOR
   \end{algorithmic}
\label{Alg1}
\end{algorithm}

\subsection{Analysis}
This section first shows that when the precedence constraint is omitted, the proposed algorithm achieves an approximation ratio of $r$, where $r=6-\frac{2}{m}$ for arbitrary release times and $r=5-\frac{2}{m}$ for zero release time. 
This means that any coflow $k$ takes at most $r\bar{C}_k$ to transmit after all its predecessors have been fully transmitted where $\bar{C}_k$ is an optimal solution to the linear program (\ref{coflow:main}).
Therefore, the proposed algorithm achieves an approximation ratio of $(6-\frac{2}{m})\mu$ with arbitrary release times, and an approximation ratio of $(5-\frac{2}{m})\mu$ without release time where $\mu$ is the coflow number of the longest path in the precedence graph. 
Let $\bar{C}_{1}\leq \cdots\leq \bar{C}_{n}$ be an optimal solution to the linear program (\ref{coflow:main}), and let $\hat{C}_{1}, \ldots, \hat{C}_{n}$ be the length of time interval between ready time and completion time for each coflow. Moreover, let $\tilde{C}_{1}, \ldots, \tilde{C}_{n}$ denote the completion times in the schedule found by flow-driven-list-scheduling.
According to our previous results~\cite{Chency2022}, we have the following lemma:

\begin{lem}\label{lem:lem3}
\cite{Chency2022} For each coflow $k=1, \ldots, n$,
\begin{eqnarray*}
\hat{C}_{k}\leq \left(6-\frac{2}{m}\right)\bar{C}_{k}.
\end{eqnarray*}
\end{lem}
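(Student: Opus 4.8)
The plan is to reduce the claim to a single ``bottleneck'' flow and then reuse the list‑scheduling‑plus‑load‑balancing analysis. I would fix a coflow $k$, let $f=(i,j,k)$ be the flow of $k$ that Algorithm~\ref{Alg1} finishes last (so $\tilde{C}_k=\tilde{C}_f$), write $\bar{C}_f:=\bar{C}_{ijk}$, and note that $\bar{C}_f\le\bar{C}_k$ by~(\ref{coflow:a}); it then suffices to show $\hat{C}_k=\tilde{C}_f-\rho\le(6-\tfrac{2}{m})\bar{C}_f$, where $\rho$ is the instant at which coflow $k$ becomes ready (all predecessors transmitted) and released. Suppose $f$ is placed on core $h$ in the first loop. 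The starting point is the observation that from time $\rho$ on, $f$ is ready, released, and --- until it completes --- incomplete, so whenever input port $i$ and output port $j$ of core $h$ are simultaneously idle, line~14 schedules $f$; hence at every time in $[\rho,\tilde{C}_f)$ either $f$ is transmitting, or input $i$ of core $h$ carries some other flow, or output $j$ of core $h$ carries some other flow.

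I would then convert this into a volume bound. Once $\rho$ has passed, any flow that seizes input $i$ (or output $j$) of core $h$ strictly before $f$ has $\bar{C}$-value at most $\bar{C}_f$, since line~13 processes ready/released/incomplete flows in non-decreasing order of $\bar{C}$ and $f$ is among the candidates; the only exception is a single flow that was already in transmission on that port at time $\rho$, whose residual work is at most its own size. Letting $x_i$ (resp.\ $x_j$) be the total volume of flows on input $i$ (resp.\ output $j$) of core $h$ with $\bar{C}$-value at most $\bar{C}_f$, this gives
\begin{eqnarray*}
\tilde{C}_f-\rho & \le & x_i+x_j-d_f+\Theta,
\end{eqnarray*}
where $\Theta$ accounts for the at‑most‑two residual ``carry‑over'' flows together with, in the release‑time model, the release offset of coflow $k$. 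Establishing that $\Theta\le\bar{C}_f$ --- i.e.\ that a flow half‑transmitted at the ready time costs only $O(\bar{C}_f)$ rather than its (possibly large) full duration, and that the release offset is absorbed via constraint~(\ref{coflow:b}) --- is exactly what separates $6-\tfrac{2}{m}$ (arbitrary release) from $5-\tfrac{2}{m}$ (zero release), and is the step I expect to be the main obstacle; it is, however, precisely the content carried over from \cite{Chency2022} (and, for a single core, from the analysis of \cite{shafiee2018improved}).

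Finally I would turn $x_i+x_j$ into a multiple of $\bar{C}_f$ using the load‑balancing rule and the LP. Because line~6 assigns each flow to the core minimizing $load_{I}(\cdot)+load_{O}(\cdot)$, summing the minimality inequality over the $m$ cores shows that at the moment $f$ is placed $load_{I}(i,h)+load_{O}(j,h)\le\tfrac{1}{m}\sum_{h'}\bigl(load_{I}(i,h')+load_{O}(j,h')\bigr)$, so $x_i+x_j\le\tfrac{1}{m}\bigl(d(S_i)+d(S_j)-2d_f\bigr)+2d_f$, where $S_i$ (resp.\ $S_j$) is the set of all flows through input $i$ (resp.\ output $j$) with $\bar{C}$-value at most $\bar{C}_f$. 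Since every $f'\in S_i$ satisfies $\bar{C}_{f'}\le\bar{C}_f$, constraint~(\ref{coflow:c}) gives $\bar{C}_f\,d(S_i)\ge\sum_{f'\in S_i}d_{f'}\bar{C}_{f'}\ge\tfrac{1}{2m}d(S_i)^2$, hence $d(S_i)\le 2m\bar{C}_f$, and symmetrically $d(S_j)\le 2m\bar{C}_f$ from~(\ref{coflow:d}). Plugging these in gives $x_i+x_j\le 4\bar{C}_f+(2-\tfrac{2}{m})d_f$, and combining with the displayed inequality, with $d_f\le\bar{C}_f$ (from~(\ref{coflow:b})), and with $\Theta\le\bar{C}_f$ yields $\hat{C}_k\le(6-\tfrac{2}{m})\bar{C}_f\le(6-\tfrac{2}{m})\bar{C}_k$; omitting the release offset leaves the $5-\tfrac{2}{m}$ bound for zero release times. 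Since the whole argument is simply the no‑precedence analysis run after the clock is reset to coflow $k$'s ready time, this recovers the statement quoted from \cite{Chency2022}.
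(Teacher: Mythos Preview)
The paper does not actually prove this lemma: it is quoted verbatim from \cite{Chency2022}, and the only content in the present paper is the citation. Your proposal is therefore not competing against an in-paper argument but reconstructing the one the reference is meant to supply. That reconstruction is essentially the right one --- fix the last-finishing flow $f=(i,j,k)$, observe that after the ready instant its input/output ports on the chosen core can only be occupied by flows of no larger $\bar{C}$-value, bound that volume via the greedy-balancing rule (the load on the chosen core is at most the average load across cores), and close with the LP inequalities~(\ref{coflow:c})--(\ref{coflow:d}) to turn $d(S_i)$ and $d(S_j)$ into $2m\bar{C}_f$. The arithmetic you carry out, arriving at $4\bar{C}_f+(1-\tfrac{2}{m})d_f$ plus one extra $\bar{C}_f$ for the release-time offset, is precisely the computation behind the $(6-\tfrac{2}{m})$ constant, and it matches the structure the paper itself uses in the analogous single-core Lemma~\ref{lem3:lem33}.

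One remark on your $\Theta$ term: because Algorithm~\ref{Alg1} re-scans the full ordered list whenever a flow becomes ready, completed, or released (the inner while at lines~18--20 returns control to line~13), the schedule is effectively preemptive at those events. Thus a lower-priority flow occupying port $i$ or $j$ at time $\rho$ is displaced as soon as $f$ becomes eligible, and the ``carry-over'' contribution you flag does not in fact arise; the only genuine piece of $\Theta$ is the release-time slack, bounded by $\bar{C}_f$ via~(\ref{coflow:b}). This is exactly how the paper's own single-core analysis accounts for the extra unit through the $r_{\max}(S)$ term, which then disappears in the zero-release case to give Lemma~\ref{lem:lem4}.
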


According to lemma~\ref{lem:lem3}, we have the following theorem:
\begin{thm}\label{thm:thm1}
The flow-driven-list-scheduling has an approximation ratio of, at most, $\left(6-\frac{2}{m}\right)\mu$.
\end{thm}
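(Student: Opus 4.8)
The plan is to leverage Lemma~\ref{lem:lem3} together with an induction along the precedence DAG to bound the \emph{actual} completion time $\tilde{C}_k$ of each coflow $k$ in terms of the LP optimum. First I would set up the key inequality: for every coflow $k$, Lemma~\ref{lem:lem3} tells us that the time elapsed between the moment $k$ becomes ready (i.e.\ all its predecessors are fully transmitted) and the moment $k$ completes is at most $(6-\frac{2}{m})\bar{C}_k$. Denote by $\rho_k$ the ready time of $k$ in our schedule; then $\tilde{C}_k \le \rho_k + (6-\frac{2}{m})\bar{C}_k$.

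Next I would bound $\rho_k$ by a telescoping/chain argument. If $k$ has no predecessors, then $\rho_k \le r_k \le \bar{C}_k$ by constraint~(\ref{coflow:b}), so $\tilde{C}_k \le (6-\frac{2}{m}+1)\bar{C}_k$ — but we can do better: actually $\rho_k$ equals the maximum completion time $\tilde{C}_{k'}$ over all immediate predecessors $k' \prec k$ (or $r_k$ if there are none). So I would prove by induction on the longest chain of predecessors ending at $k$ that $\tilde{C}_k \le (6-\frac{2}{m})\,\mu_k\,\bar{C}_k$, where $\mu_k$ is the number of coflows on the longest path in the DAG ending at $k$. For the base case ($\mu_k = 1$, no predecessors), $\tilde{C}_k \le r_k + (6-\frac{2}{m})\bar{C}_k \le (6-\frac{2}{m})\bar{C}_k$ using $r_k \le \bar{C}_k$ and $6-\frac{2}{m}\ge 1$... here I need to be slightly careful, since $r_k + (6-\frac{2}{m})\bar C_k \le (6-\frac 2m)\bar C_k$ would need $r_k \le 0$; the cleaner route is to absorb $r_k$ into the $\hat C$ bound or to note $\hat C_k$ already accounts from ready time, so for a source coflow $\tilde C_k = \hat C_k \le (6-\frac 2m)\bar C_k$ directly. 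For the inductive step, pick an immediate predecessor $k'$ realizing $\rho_k = \tilde{C}_{k'}$; by induction $\tilde{C}_{k'} \le (6-\frac{2}{m})\mu_{k'}\bar{C}_{k'}$, and since $k' \prec k$ we have $\mu_{k'} \le \mu_k - 1$ and, crucially, $\bar{C}_{k'} \le \bar{C}_k$ — the latter follows from constraint~(\ref{coflow:e}) which forces $\bar C_{ijk} \ge \bar C_{k'} + d_{ijk}$, hence $\bar C_k \ge \bar C_{k'}$. Combining, $\tilde{C}_k \le \tilde{C}_{k'} + (6-\frac{2}{m})\bar{C}_k \le (6-\frac{2}{m})(\mu_k-1)\bar{C}_k + (6-\frac{2}{m})\bar{C}_k = (6-\frac{2}{m})\mu_k\bar{C}_k$.

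Finally I would conclude: since $\mu_k \le \mu$ for every $k$, we get $\tilde{C}_k \le (6-\frac{2}{m})\mu\,\bar{C}_k$, and therefore
\begin{equation*}
\sum_{k\in\mathcal{K}} w_k \tilde{C}_k \le \left(6-\frac{2}{m}\right)\mu \sum_{k\in\mathcal{K}} w_k \bar{C}_k \le \left(6-\frac{2}{m}\right)\mu \cdot \mathrm{OPT},
\end{equation*}
where the last step uses that the LP~(\ref{coflow:main}) is a relaxation of the integral scheduling problem, so $\sum w_k \bar C_k$ is a lower bound on the optimal total weighted completion time.

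The main obstacle I anticipate is making the telescoping argument for $\rho_k$ fully rigorous: I must argue that in Algorithm~\ref{Alg1} a flow of coflow $k$ genuinely cannot start before every flow of every predecessor $k'$ has finished (this is what "ready" encodes in lines 12--13), so that $\rho_k = \max_{k' \prec k}\tilde C_{k'}$ is a valid identity rather than just an inequality, and that Lemma~\ref{lem:lem3}'s bound on $\hat C_k$ really is measured from this ready time. A secondary subtlety is handling release times consistently inside the induction — since $\hat C_k$ is defined as the interval from ready time to completion, and release times are already folded into the LP constraints and into Lemma~\ref{lem:lem3}, the $r_k$ term does not reappear separately, but I would state this explicitly to avoid an off-by-a-constant error in the ratio.
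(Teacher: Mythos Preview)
Your proposal is correct and follows essentially the same approach as the paper: the paper bounds $\tilde{C}_k$ by summing the interval lengths $\hat{C}_{v_q}$ along the longest chain $v_1\cdots v_f=k$, applies Lemma~\ref{lem:lem3} termwise, and then uses constraint~(\ref{coflow:e}) to replace each $\bar{C}_{v_q}$ by $\bar{C}_k$, yielding $\tilde{C}_k\le f(6-\frac{2}{m})\bar{C}_k\le\mu(6-\frac{2}{m})\bar{C}_k$. Your induction on $\mu_k$ is just the recursive formulation of this chain-sum argument, and your handling of the base case and of release times (absorbed into $\hat{C}_k$ via Lemma~\ref{lem:lem3}) is slightly more explicit than the paper's but amounts to the same thing.
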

\begin{proof}
For each coflow $k=1, \ldots, n$, the completion time is at most the time of the longest path in the precedence graph at time 0.
Assume the longest path of coflow $k$ is $v_1v_2\cdots v_f$ where $v_f=k$.
We have
\begin{eqnarray}
\tilde{C}_{k} & \leq & \sum_{q=1}^{f} \hat{C}_{v_q} \label{lem4:eq1}\\
              & \leq & \sum_{q=1}^{f} \left(6-\frac{2}{m}\right)\bar{C}_{v_q} \label{lem4:eq2}\\
							& \leq & \sum_{q=1}^{f} \left(6-\frac{2}{m}\right)\bar{C}_{k} \label{lem4:eq3}\\
							& =    & f \left(6-\frac{2}{m}\right)\bar{C}_{k} \label{lem4:eq4}\\
							& \leq & \mu\left(6-\frac{2}{m}\right)\bar{C}_{k}. \label{lem4:eq5}
\end{eqnarray}
The inequality~(\ref{lem4:eq1}) is that the bound of $\tilde{C}_{k}$ is the sum of all length of of time intervals in the longest path.
The inequality~(\ref{lem4:eq2}) is based on lemma~\ref{lem:lem3}.
The inequality~(\ref{lem4:eq3}) is due to the constraint~(\ref{coflow:e}) in linear program~(\ref{coflow:main}).
The inequality~(\ref{lem4:eq5}) is because $\mu$ is the coflow number of the longest path in the precedence graph.
\end{proof}

When all coflows are release at time zero, we have the following lemma:
\begin{lem}\label{lem:lem4}
\cite{Chency2022} For each coflow $k=1, \ldots, n$,
\begin{eqnarray*}
\hat{C}_{k}\leq \left(5-\frac{2}{m}\right)\bar{C}_{k}.
\end{eqnarray*}
when all coflows are released at time zero.
\end{lem}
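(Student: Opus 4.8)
The plan is to retrace the argument behind Lemma~\ref{lem:lem3} (the release-free case is just the special case of that analysis with every $r_k=0$) and to isolate the single place where the release times are charged; removing that charge is exactly what lowers the constant from $6-\frac2m$ to $5-\frac2m$.

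Fix a flow $f=(i,j,k)$ and let $h^{*}$ be the core to which the first loop (lines 5--9) of Algorithm~\ref{Alg1} assigns $f$. Let $S_i$ be the set of flows through input port $i$ that appear no later than $f$ in the list (including $f$ itself), and let $S_j$ be the analogous set for output port $j$. The bound on $\hat C_k$ rests on three ingredients. (i) \textbf{Load balancing.} Because line~6 chooses $h^{*}$ to minimize $load_{I}(i,h)+load_{O}(j,h)$, the value $load_{I}(i,h^{*})+load_{O}(j,h^{*})$ measured just before $f$ is inserted is at most $\frac1m\sum_{h}\bigl(load_{I}(i,h)+load_{O}(j,h)\bigr)=\frac1m\bigl(d(S_i)+d(S_j)-2d_f\bigr)$. (ii) \textbf{LP lower bounds.} Since $\bar C_1\le\cdots\le\bar C_n$, constraint~(\ref{coflow:c}) applied to $S_i$ gives $\bar C_f\,d(S_i)\ge\sum_{g\in S_i}d_g\bar C_g\ge\frac1{2m}d(S_i)^2$, hence $d(S_i)\le 2m\bar C_f$, and constraint~(\ref{coflow:d}) gives $d(S_j)\le 2m\bar C_f$ in the same way. (iii) \textbf{No wasted idling.} In the second loop, whenever link $(i,j)$ is idle while $f$ is still incomplete, the order of processing in lines 13--17 forces either input $i$ or output $j$ to be busy transmitting a flow of $\mathcal A_{h^{*}}$ that precedes $f$ in the list; the total size of such flows on port $i$ (resp.\ $j$) is exactly $load_{I}(i,h^{*})$ (resp.\ $load_{O}(j,h^{*})$) measured just before $f$ is inserted, so the elapsed time from the moment $f$ becomes ready until $f$ completes is at most $load_{I}(i,h^{*})+load_{O}(j,h^{*})+d_f$.

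Combining (i)--(iii), the time to finish $f$ after it becomes ready is at most $\frac1m\bigl(d(S_i)+d(S_j)-2d_f\bigr)+d_f\le 4\bar C_f-\frac{2d_f}{m}+d_f=4\bar C_f+\bigl(1-\frac2m\bigr)d_f$. With all release times zero, constraint~(\ref{coflow:b}) reads $d_f\le\bar C_{ijk}\le\bar C_f$, so this is at most $\bigl(5-\frac2m\bigr)\bar C_f$; since all flows of coflow $k$ share the ready time of $k$, taking the maximum over them and using $\bar C_f\le\bar C_k$ (constraint~(\ref{coflow:a})) yields $\hat C_k\le\bigl(5-\frac2m\bigr)\bar C_k$. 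In Lemma~\ref{lem:lem3} the same computation carries an extra additive $r_k$ in ingredient~(iii) — the interval before $f$ is released, during which link $(i,j)$ may sit idle with no earlier flow to blame — and bounding that by $r_k\le\bar C_f$ is precisely what inflates the constant to $6-\frac2m$; here that interval is empty and the term disappears.

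The step I expect to be the crux is ingredient~(iii): one must verify carefully that, with zero release times, the list-driven second loop of Algorithm~\ref{Alg1} never leaves link $(i,j)$ idle "for free" before $f$ completes, so that the finishing time of $f$ is genuinely charged only to $load_{I}(i,h^{*})+load_{O}(j,h^{*})+d_f$ with no additional slack (in particular, no flow listed after $f$ can block $f$ once $f$ is ready). Once this structural property of the schedule is established, the rest is the arithmetic above, which is identical to that of Lemma~\ref{lem:lem3} with the release-time contribution deleted.
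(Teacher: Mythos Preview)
The paper does not actually prove this lemma here: it is quoted verbatim from \cite{Chency2022} with no argument supplied, just as Lemma~\ref{lem:lem3} is. Your reconstruction is the natural one and almost certainly coincides with the proof in \cite{Chency2022}; in particular it follows exactly the template the present paper \emph{does} spell out for the single-core analogue (Lemma~\ref{lem3:lem33}/\ref{lem3:lem44}), namely ``busy-port charging'' to bound the idle time, the parallel-balls-in-bins averaging to control the per-core load, and the Queyranne-type inequalities (\ref{coflow:c})--(\ref{coflow:d}) to turn loads into multiples of $\bar C_f$. Your identification of the one place where the $r_k\le\bar C_f$ charge is dropped, converting $6-\frac2m$ into $5-\frac2m$, is exactly the content of the lemma.

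One small remark on the arithmetic: your final step uses $(1-\tfrac2m)d_f\le(1-\tfrac2m)\bar C_f$, which requires $m\ge2$. For $m=1$ the coefficient $1-\tfrac2m$ is negative and the inequality would need to be reversed; the bound $(5-\tfrac2m)\bar C_k=3\bar C_k$ does not follow from your chain in that case. This is consistent with the paper treating $m=1$ separately in Section~\ref{sec:Algorithm3} (where only a factor~$4$ is claimed), so the implicit assumption $m\ge2$ is harmless here, but it is worth flagging.
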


According to lemma~\ref{lem:lem4}, we have the following theorem:
\begin{thm}\label{thm:thm2}
For the special case when all coflows are released at time zero, the flow-driven-list-scheduling has an approximation ratio of, at most, $\left(5-\frac{2}{m}\right)\mu$.
\end{thm}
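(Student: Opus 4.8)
The plan is to mirror the proof of Theorem~\ref{thm:thm1} essentially verbatim, replacing the appeal to Lemma~\ref{lem:lem3} by an appeal to Lemma~\ref{lem:lem4}. First I would fix a coflow $k$ and let $v_1 v_2 \cdots v_f$ with $v_f = k$ be a longest directed path in $G$ ending at $k$. The key structural observation, unchanged from the arbitrary-release-time case, is that the algorithm cannot begin any flow of $v_q$ until every flow of $v_{q-1}$ has completed, so the actual completion time $\tilde{C}_k$ is bounded by the sum of the per-coflow ``active'' intervals $\hat{C}_{v_1}, \ldots, \hat{C}_{v_f}$ along this path; this is exactly inequality~(\ref{lem4:eq1}), and it holds regardless of release times.

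Next I would invoke Lemma~\ref{lem:lem4}, which applies precisely because all coflows are released at time zero, to replace each $\hat{C}_{v_q}$ by $\left(5-\frac{2}{m}\right)\bar{C}_{v_q}$. Then, since $v_q \prec k$ for every $q < f$ (and $v_f = k$), constraint~(\ref{coflow:e}) of the linear program~(\ref{coflow:main}) forces $\bar{C}_{v_q} \le \bar{C}_k$, so the sum collapses to $f\left(5-\frac{2}{m}\right)\bar{C}_k \le \mu\left(5-\frac{2}{m}\right)\bar{C}_k$, the last step using that $\mu$ is the coflow number of the longest path in $G$. This yields $\tilde{C}_k \le \left(5-\frac{2}{m}\right)\mu\,\bar{C}_k$ for every $k = 1, \ldots, n$. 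Finally I would multiply by $w_k$ and sum, obtaining $\sum_k w_k \tilde{C}_k \le \left(5-\frac{2}{m}\right)\mu \sum_k w_k \bar{C}_k \le \left(5-\frac{2}{m}\right)\mu \cdot \mathrm{OPT}$, where the last inequality holds because $\bar{C}$ is an optimal solution of the LP relaxation~(\ref{coflow:main}) and hence its objective value lower-bounds the integral optimum.

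I do not expect any real obstacle here: the only thing to double-check is that none of the inequalities~(\ref{lem4:eq1})--(\ref{lem4:eq5}) in the proof of Theorem~\ref{thm:thm1} secretly relied on $r_k > 0$ or on the specific constant $6-\frac{2}{m}$ rather than on an abstract per-coflow bound $\hat{C}_{v_q} \le c\,\bar{C}_{v_q}$. They do not — the longest-path argument and the use of~(\ref{coflow:e}) are insensitive to both — so the argument carries over line-for-line with the improved constant supplied by Lemma~\ref{lem:lem4}.
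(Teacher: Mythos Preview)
Your proposal is correct and is exactly the approach the paper takes: the paper's proof of Theorem~\ref{thm:thm2} simply reads ``The proof is similar to that of theorem~\ref{thm:thm1},'' i.e., rerun the longest-path argument (\ref{lem4:eq1})--(\ref{lem4:eq5}) with Lemma~\ref{lem:lem4} in place of Lemma~\ref{lem:lem3}. Your write-up in fact supplies more detail than the paper itself.
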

\begin{proof}
The proof is similar to that of theorem~\ref{thm:thm1}.
\end{proof}

\section{Approximation Algorithm for Indivisible Coflow Scheduling}\label{sec:Algorithm2}
This section consider the indivisible coflow scheduling problem. We assume all flows in a coflow can only be transmitted through the same core. For every coflow $k$ and input port $i$, let $L_{ik}=\sum_{j=1}^{N}d_{ijk}$ be the total amount of data that coflow $k$ needs to transmit through input port $i$. Moreover, let $L_{jk}=\sum_{i=1}^{N}d_{ijk}$ be the total amount of data that coflow $k$ needs to transmit through output port $j$.
We can formulate our problem as the following linear programming relaxation:
\begin{subequations}\label{incoflow:main}
\begin{align}
& \text{min}  && \sum_{k \in \mathcal{K}} w_{k} C_{k}     &   & \tag{\ref{incoflow:main}} \\
& \text{s.t.} && C_{k}\geq r_k+L_{ik}, && \forall k\in \mathcal{K}, \forall i\in \mathcal{I} \label{incoflow:a} \\
&             && C_{k}\geq r_k+L_{jk}, && \forall k\in \mathcal{K}, \forall j\in \mathcal{J} \label{incoflow:b} \\
&  && C_{k}\geq C_{k'}+L_{ik}, && \forall k, k'\in \mathcal{K}, \forall i\in \mathcal{I}: k'\prec k\label{incoflow:e} \\
&  && C_{k}\geq C_{k'}+L_{jk}, && \forall k, k'\in \mathcal{K}, \forall j\in \mathcal{J}: k'\prec k\label{incoflow:f} \\
&  && \sum_{k\in S}L_{ik}C_{k} \geq \frac{1}{2m} \left(\sum_{k\in S} L_{ik}^2+\left(\sum_{k\in S} L_{ik}\right)^2\right),&& \forall i\in \mathcal{I}, \forall S\subseteq \mathcal{K} \label{incoflow:c} \\
&  && \sum_{k\in S}L_{jk}C_{k} \geq \frac{1}{2m} \left(\sum_{k\in S} L_{jk}^2+\left(\sum_{k\in S} L_{jk}\right)^2\right),&& \forall j\in \mathcal{J}, \forall S\subseteq \mathcal{K} \label{incoflow:d} 
\end{align}
\end{subequations}

The constraints~(\ref{incoflow:a}) and (\ref{incoflow:b}) ensure that the completion time of any coflow $k$ is at least its release time $r_k$ plus its load. 
The constraints~(\ref{incoflow:e}) and (\ref{incoflow:f}) are the precedence constraints.
The constraints~(\ref{incoflow:c}) and (\ref{incoflow:d}) are modified from the scheduling literature~\cite{Leslie1997, ahmadi2020scheduling} to lower bound the completion time variable in the input port and the output port, respectively.

Our algorithm coflow-driven-list-scheduling (described in Algorithm~\ref{Alg2}) is modified from our previous algorithm~\cite{Chency2022}.
Given a set $\mathcal{K}$ of $n$ coflows, we have an optimal solution $\bar{C}_1, \ldots, \bar{C}_n$ from the linear program (\ref{incoflow:main}). 
Without loss of generality, we assume $\bar{C}_{1}\leq \cdots\leq \bar{C}_{n}$ and schedule all the flows iteratively in the order of this list.
For each coflow $k$, we find a network core $h$ such that the complete time of coflow $k$ in this network core is minimized (Lines 5-9). 
Lines 10-24 transmit all the flows in non-decreasing order in $\bar{C}$.

\begin{algorithm}
\caption{coflow-driven-list-scheduling}
    \begin{algorithmic}[1]
		    \REQUIRE a vector $\bar{C}\in \mathbb{R}_{\scriptscriptstyle \geq 0}^{n}$ used to decide the order of scheduling
				\STATE let $load_{I}(i,h)$ be the load on the $i$-th input port of the network core $h$
				\STATE let $load_{O}(j,h)$ be the load on the $j$-th output port of the network core $h$
				\STATE let $\mathcal{A}_h$ be the set of coflows allocated to network core $h$
				\STATE both $load_{I}$ and $load_{O}$ are initialized to zero and $\mathcal{A}_h=\emptyset$ for all $h\in [1, m]$
				\FOR{every coflow $k$ in non-decreasing order of $\bar{C}_k$, breaking ties arbitrarily}
				    \STATE $h^*=\arg \min_{h\in [1, m]}\left(\max_{i,j\in [1,N]}\left(load_{I}(i,h)+load_{O}(j,h)+L_{ik}+L_{jk}\right)\right)$
						\STATE $\mathcal{A}_{h^*}=\mathcal{A}_{h^*}\cup \left\{k\right\}$
						\STATE $load_{I}(i,h^*)=load_{I}(i,h^*)+L_{ik}$ and $load_{O}(j,h^*)=load_{O}(j,h^*)+L_{jk}$ for all $i,j\in [1,N]$
				\ENDFOR
				\FOR{each $h\in [1, m]$ do in parallel}
				    \STATE wait until the first coflow is released
						\WHILE{there is some incomplete flow}
						    \STATE for all $k\in \mathcal{A}_{h}$, list the ready, released and incomplete flows respecting the non-decreasing order in $\bar{C}_k$
								\STATE let $L$ be the set of flows in the list
                \FOR{every flow $f=(i, j, k)\in L$}
										\IF{the link $(i, j)$ is idle}
										    \STATE schedule flow $f$
										\ENDIF
								\ENDFOR
								\WHILE{no new flow is ready, completed or released}
								    \STATE transmit the flows that get scheduled in line 17 at maximum rate 1.
								\ENDWHILE
						\ENDWHILE
				\ENDFOR
   \end{algorithmic}
\label{Alg2}
\end{algorithm}

\subsection{Analysis}
This section first shows that when the precedence constraint is omitted, the proposed algorithm achieves an approximation ratio of $r$, where $r=4m+1$ for arbitrary release times and $r=4m$ for zero release time. 
According to our previous results~\cite{Chency2022}, we have the following lemma:

\begin{lem}\label{lem:lem33}
\cite{Chency2022} For each coflow $k=1, \ldots, n$,
\begin{eqnarray*}
\hat{C}_{k}\leq \left(4m+1\right)\bar{C}_{k}.
\end{eqnarray*}
\end{lem}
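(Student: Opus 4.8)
The plan is to obtain Lemma~\ref{lem:lem33} from two ingredients, exactly as in the divisible case: an LP-based estimate of the backlog that can build up on any single port, and a single-core list-scheduling guarantee, glued together by the (essentially trivial) observation that the load carried by one core cannot exceed the load summed over all cores. Fix a coflow $k$, let $h^{(k)}$ be the core it is assigned to in Lines~5--9, and let $load_{I}(i,h^{(k)}),load_{O}(j,h^{(k)})$ denote the port loads on that core just before $k$ is inserted. The quantity to bound is $\hat{C}_k$, the elapsed time between the instant $k$ becomes ready (all predecessors finished and $k$ released) and the instant its last flow completes; since the precedence constraint is omitted in this first step, this ``ready'' instant is simply $r_k$.

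First I would extract the load bounds from the linear program. Apply constraint~(\ref{incoflow:c}) to $S=\{1,\dots,k\}=\{k':\bar{C}_{k'}\le\bar{C}_k\}$: on the left replace each $\bar{C}_{k'}$ by the larger $\bar{C}_k$, and on the right discard the $\sum_{k'\in S}L_{ik'}^2$ term, giving $\sum_{k'\le k}L_{ik'}\le 2m\bar{C}_k$ for every input port $i$, and symmetrically $\sum_{k'\le k}L_{jk'}\le 2m\bar{C}_k$ from~(\ref{incoflow:d}). Constraints~(\ref{incoflow:a}) and~(\ref{incoflow:b}) additionally give $r_{k'}\le\bar{C}_{k'}\le\bar{C}_k$ for every $k'\le k$. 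Since each coflow is assigned to exactly one core, the load on input $i$ of core $h^{(k)}$ among coflows $\le k$ is at most $\sum_{k'\le k}L_{ik'}\le 2m\bar{C}_k$, and likewise for output ports; in particular the cumulative load on $k$'s busiest input port plus that on its busiest output port, within core $h^{(k)}$ and among coflows ordered no later than $k$, is at most $4m\bar{C}_k$. (This already bounds the quantity minimized in Line~6, so the minimality in the greedy rule is not needed for this lemma, only the fact that $k$ is placed on some core.)

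Then I would invoke the single-core analysis of Lines~10--24, which is the Shafiee--Ghaderi-style argument already used in~\cite{Chency2022, shafiee2018improved}, now applied to the sub-instance of coflows routed to core $h^{(k)}$. Starting the clock at $k$'s ready instant, a link $(i,j)$ carrying a flow of $k$ can be idle only when it is occupied by a flow of a coflow ordered before $k$ on the same link, or when it is waiting for a flow that is not yet ready/released; charging the first kind of idleness to the two busiest cumulative port loads (total $\le 4m\bar{C}_k$) and the second kind to one extra $\bar{C}_k$ (because every relevant release time is $\le\bar{C}_k$ by the previous paragraph), one concludes $\hat{C}_k\le 4m\bar{C}_k+\bar{C}_k=(4m+1)\bar{C}_k$. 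For the zero-release special case the second term vanishes, yielding the companion $4m$ bound stated immediately afterwards.

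The main obstacle is precisely this last step: the single-core guarantee in the presence of release (equivalently ready) times. One must account carefully for the idle intervals on the two ports relevant to $k$, since a coflow $k'$ with $\bar{C}_{k'}\le\bar{C}_k$ but large $r_{k'}$ can be released after $k$ has already become ready and then jump ahead of $k$'s flows in the $\bar{C}$-order, leaving links idle in the meantime. Showing that the total delay charged to such events is at most a single additional $\bar{C}_k$ — so that the constant is $4m+1$ and not something larger — is the delicate part, and I would handle it exactly as in the single-core coflow literature, by tracking the last idle instant on each of $k$'s busiest ports before its completion and bounding the preceding idle mass by $\max_{k'\le k} r_{k'}$.
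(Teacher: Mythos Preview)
The paper does not prove this lemma; it is quoted from \cite{Chency2022} without argument, so there is no in-paper proof to compare against. Your reconstruction is correct and is exactly the template the paper itself spells out for the single-core case in Lemma~\ref{lem3:lem33}: derive $\sum_{k'\le k}L_{ik'}\le 2m\bar{C}_k$ and $\sum_{k'\le k}L_{jk'}\le 2m\bar{C}_k$ from (\ref{incoflow:c})--(\ref{incoflow:d}), derive $r_{\max}\le\bar{C}_k$ from (\ref{incoflow:a})--(\ref{incoflow:b}), and feed both into the list-scheduling inequality $\hat{C}_k\le r_{\max}+\sum_{k'\le k}(L_{ik'}+L_{jk'})$ restricted to the core $h^{(k)}$.

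Your side observation that the greedy choice in Line~6 is not actually used to obtain the constant $4m+1$ --- the trivial bound ``load on one core $\le$ load summed over all cores'' already suffices --- is correct. The greedy rule fixes the assignment but does not tighten this particular analysis; any deterministic assignment of whole coflows to cores would yield the same $(4m+1)$ bound under your argument.
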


According to lemma~\ref{lem:lem33}, we have the following theorem:
\begin{thm}\label{thm:thm11}
The coflow-driven-list-scheduling has an approximation ratio of, at most, $\left(4m+1\right)\mu$.
\end{thm}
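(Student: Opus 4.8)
The plan is to transcribe the proof of Theorem~\ref{thm:thm1}, substituting the per-coflow estimate of Lemma~\ref{lem:lem33} for that of Lemma~\ref{lem:lem3}. Recall that $\hat{C}_k$ is the time that elapses between the instant coflow $k$ becomes ready (all predecessors fully transmitted) and the instant it completes in the schedule produced by coflow-driven-list-scheduling, and that Lemma~\ref{lem:lem33} already gives $\hat{C}_k \le (4m+1)\bar{C}_k$ for the indivisible core-assignment rule of Algorithm~\ref{Alg2}. Fix a coflow $k$. Starting from $v_f = k$ and repeatedly stepping to the predecessor that finishes latest in the schedule, build a chain $v_1 \prec v_2 \prec \cdots \prec v_f = k$; since this is a directed path in $G$, it contains at most $\mu$ coflows, i.e. $f \le \mu$.

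First I would prove the telescoping inequality $\tilde{C}_k \le \sum_{q=1}^{f} \hat{C}_{v_q}$. The key observation is that a flow of a coflow $c$ is never scheduled before every predecessor of $c$ has been transmitted, so the ready time of $c$ equals $\max_{c' \prec c}\tilde{C}_{c'}$; by our choice of $v_{q-1}$ as the latest-finishing predecessor of $v_q$, this maximum is $\tilde{C}_{v_{q-1}}$, whence $\tilde{C}_{v_q} \le \tilde{C}_{v_{q-1}} + \hat{C}_{v_q}$. Unrolling the recursion down to $v_1$ (whose ready time is just its release time, which is already accounted for inside $\hat{C}_{v_1}$) yields the bound. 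This telescoping step is where the care is needed; the rest is mechanical.

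The remaining steps mirror~(\ref{lem4:eq2})--(\ref{lem4:eq5}) exactly: apply Lemma~\ref{lem:lem33} to each term to get $\tilde{C}_k \le (4m+1)\sum_{q=1}^{f}\bar{C}_{v_q}$; invoke constraints~(\ref{incoflow:e}) and~(\ref{incoflow:f}), which give $\bar{C}_{v_1} \le \bar{C}_{v_2} \le \cdots \le \bar{C}_{v_f} = \bar{C}_k$, to replace every $\bar{C}_{v_q}$ by $\bar{C}_k$, so that $\tilde{C}_k \le f(4m+1)\bar{C}_k \le \mu(4m+1)\bar{C}_k$; finally multiply by $w_k$ and sum over $k \in \mathcal{K}$, using that $\sum_{k}w_k\bar{C}_k$ is the optimum of the relaxation~(\ref{incoflow:main}) and therefore a lower bound on the optimal total weighted completion time. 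The same argument with the zero-release counterpart of Lemma~\ref{lem:lem33} gives the $4m\mu$ bound claimed for that case.

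The main obstacle is really just the telescoping step: one must be sure that stepping to the latest-finishing predecessor always produces a genuine directed path of at most $\mu$ coflows and that the ready time of each coflow on the path is exactly the completion time of its chosen predecessor. A minor point worth verifying is that Lemma~\ref{lem:lem33}, imported from~\cite{Chency2022}, is stated for precisely the core-assignment rule in line~6 of Algorithm~\ref{Alg2} and the same notion of $\hat{C}$, so that it may be used as a black box here.
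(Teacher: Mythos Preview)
Your proposal is correct and follows essentially the same route as the paper, which simply notes that the proof of Theorem~\ref{thm:thm11} is identical to that of Theorem~\ref{thm:thm1} with Lemma~\ref{lem:lem33} replacing Lemma~\ref{lem:lem3} and the LP constraints~(\ref{incoflow:e})--(\ref{incoflow:f}) replacing~(\ref{coflow:e}). Your treatment of the telescoping step---choosing $v_{q-1}$ as the latest-finishing predecessor of $v_q$---is in fact more careful than the paper's own wording, which just invokes ``the longest path'' without making explicit why the ready time of each $v_q$ coincides with $\tilde{C}_{v_{q-1}}$.
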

\begin{proof}
The proof is similar to that of theorem~\ref{thm:thm1}.
\end{proof}

We also have the following lemma:
\begin{lem}\label{lem:lem44}
\cite{Chency2022} For each coflow $k=1, \ldots, n$,
\begin{eqnarray*}
\hat{C}_{k}\leq 4m\bar{C}_{k}.
\end{eqnarray*}
when all coflows are released at time zero.
\end{lem}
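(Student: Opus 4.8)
The plan is to follow the proof of Lemma~\ref{lem:lem33} line by line and simply track the single place where the hypothesis $r_k=0$ for all $k$ removes an additive term. Recall that with the precedence constraints omitted $\hat{C}_k$ is just the time at which coflow-driven-list-scheduling completes coflow $k$ (its ``ready time'' is $0$, since $k$ has no predecessors and, now, release time $0$), so it suffices to prove $\tilde{C}_k\le 4m\bar{C}_k$. I would fix $k$, let $h$ be the core to which line~6 of Algorithm~\ref{Alg2} assigns $k$, and set $S=\{k'\in\mathcal{A}_h:\bar{C}_{k'}\le\bar{C}_k\}$; note $S\subseteq S':=\{k'\in\mathcal{K}:\bar{C}_{k'}\le\bar{C}_k\}$.

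First I would establish the within-core bound
\begin{equation*}
\tilde{C}_k\;\le\;\max_{i,j:\,d_{ijk}>0}\Big(\sum_{k'\in S}L_{ik'}+\sum_{k'\in S}L_{jk'}\Big)\;\le\;\max_{i\in\mathcal{I}}\sum_{k'\in S}L_{ik'}\;+\;\max_{j\in\mathcal{J}}\sum_{k'\in S}L_{jk'}.
\end{equation*}
On core $h$ every flow of $\mathcal{A}_h$ is released and ready from time $0$, and the inner loop of the algorithm keeps a link $(i,j)$ busy whenever some ready flow wants it, serving flows in non-decreasing $\bar{C}$ order; hence a flow $(i,j,k)$ can be ready-but-idle only while input port $i$ or output port $j$ of core $h$ is occupied by a flow listed no later than $(i,j,k)$, i.e.\ by a flow of some coflow in $S$. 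Summing the sizes of those blocking flows over the (arbitrarily tie-broken) flows of coflow $k$ yields the displayed estimate; the copy of $d_{ijk}$ counted on both the input and the output side only helps. With arbitrary release times this estimate would instead start from $r_k$ (a link for $(i,j,k)$ may sit idle during $[0,r_k]$), which is exactly the ``$+1$'' in Lemma~\ref{lem:lem33}; under the hypothesis $r_k=0$ that term vanishes.

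Then I would plug in the linear-programming lower bounds. Applying constraint~(\ref{incoflow:c}) to the set $S'$ and input port $i$, and using $\bar{C}_{k'}\le\bar{C}_k$ for $k'\in S'$,
\begin{equation*}
\bar{C}_k\sum_{k'\in S'}L_{ik'}\;\ge\;\sum_{k'\in S'}L_{ik'}\,\bar{C}_{k'}\;\ge\;\frac{1}{2m}\Big(\sum_{k'\in S'}L_{ik'}\Big)^2,
\end{equation*}
so $\sum_{k'\in S}L_{ik'}\le\sum_{k'\in S'}L_{ik'}\le 2m\bar{C}_k$, and symmetrically $\sum_{k'\in S}L_{jk'}\le 2m\bar{C}_k$ via constraint~(\ref{incoflow:d}). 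Substituting into the previous display gives
\begin{equation*}
\hat{C}_k\;=\;\tilde{C}_k\;\le\;2m\bar{C}_k+2m\bar{C}_k\;=\;4m\bar{C}_k,
\end{equation*}
which is the claim. (An analogue of Theorem~\ref{thm:thm11} then upgrades this to a $4m\mu$-approximation in the zero-release case, exactly as Theorem~\ref{thm:thm2} follows from Lemma~\ref{lem:lem4}.)

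The step I expect to carry the real weight is the within-core busy-link bound: one must argue carefully that whenever $(i,j,k)$ is unfinished the scheduler is serving it or serving, on port $i$ or port $j$ of core $h$, a flow appearing no later in the list — so that all of its blocking time is chargeable to $S$ — and that the maximal-rate transmission step introduces no unaccounted idle time. This is precisely the list-scheduling argument of~\cite{Chency2022} (in the spirit of~\cite{shafiee2018improved}) that Lemma~\ref{lem:lem33} already invokes; everything else above is just the bookkeeping that isolates the single difference between the ratios $4m+1$ and $4m$.
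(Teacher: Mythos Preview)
Your proposal is correct and follows the same line of argument the paper uses elsewhere (Section~5, Lemmas~\ref{lem3:lem33}--\ref{lem3:lem44}) and cites from~\cite{Chency2022} for this lemma: a within-core busy-port list-scheduling bound followed by the LP inequalities~(\ref{incoflow:c})--(\ref{incoflow:d}), with the single additive $r_{\max}(S)$ term dropping out under the zero-release hypothesis. One minor remark: your argument never invokes the load-balancing assignment rule in line~6 of Algorithm~\ref{Alg2}, bounding instead via $S\subseteq S'$ and the factor $1/(2m)$ in the LP constraints; this is harmless for the stated $4m$ ratio, and indeed matches how the analogous single-core bound is derived in the paper.
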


According to lemma~\ref{lem:lem44}, we have the following theorem:
\begin{thm}\label{thm:thm22}
For the special case when all coflows are released at time zero, the coflow-driven-list-scheduling has an approximation ratio of, at most, $4m\mu$.
\end{thm}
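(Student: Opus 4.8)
The plan is to mirror the proof of Theorem~\ref{thm:thm1}, replacing Lemma~\ref{lem:lem3} by Lemma~\ref{lem:lem44} and the precedence constraint~(\ref{coflow:e}) by~(\ref{incoflow:e})--(\ref{incoflow:f}). Fix a coflow $k$ and let $v_1 v_2 \cdots v_f = k$ be a longest path in the DAG $G$ terminating at $k$. Since $\hat{C}_{v_q}$ is the elapsed time between the instant all predecessors of $v_q$ have finished transmitting and the instant $v_q$ completes, and since in Algorithm~\ref{Alg2} no flow of $v_{q}$ is scheduled before every flow of $v_{q-1}$ has finished, a telescoping argument along the path gives $\tilde{C}_k \le \sum_{q=1}^{f}\hat{C}_{v_q}$, exactly as in inequality~(\ref{lem4:eq1}).

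Next I would apply Lemma~\ref{lem:lem44} termwise, yielding $\tilde{C}_k \le \sum_{q=1}^{f} 4m\,\bar{C}_{v_q}$, and then use that $v_q \prec v_{q+1}$ for every $q$ together with constraint~(\ref{incoflow:e}) (equivalently~(\ref{incoflow:f})), which forces $\bar{C}_{v_1} \le \bar{C}_{v_2} \le \cdots \le \bar{C}_{v_f} = \bar{C}_k$; hence $\tilde{C}_k \le f\cdot 4m\,\bar{C}_k \le 4m\mu\,\bar{C}_k$, the last step because $f \le \mu$ by definition of $\mu$. Multiplying by $w_k$ and summing over all coflows, $\sum_{k} w_k \tilde{C}_k \le 4m\mu \sum_k w_k \bar{C}_k \le 4m\mu \cdot \mathrm{OPT}$, since the linear program~(\ref{incoflow:main}) is a valid relaxation of the indivisible problem in the zero-release-time case.

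The only genuinely new ingredient relative to Theorem~\ref{thm:thm1} is Lemma~\ref{lem:lem44}, which is already imported from~\cite{Chency2022}, so I do not expect a real obstacle; the remainder is the same path-summation bookkeeping. The one point deserving a line of care is the telescoping bound $\tilde{C}_k \le \sum_{q} \hat{C}_{v_q}$: one must check that the moment all predecessors of a coflow have completed, its flows acquire exactly the ``ready, released and incomplete'' status consumed by the list-scheduling loop of Algorithm~\ref{Alg2}, so that the per-coflow guarantee of Lemma~\ref{lem:lem44} (stated for the precedence-free interval $\hat{C}$) composes correctly along the chain $v_1,\dots,v_f$. Because all release times are zero here, the ``released'' qualifier is vacuous, which slightly simplifies this verification compared with the arbitrary-release-time case of Theorem~\ref{thm:thm11}.
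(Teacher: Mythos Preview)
Your proposal is correct and follows exactly the approach the paper takes: the paper's own proof is simply ``The proof is similar to that of theorem~\ref{thm:thm1},'' and you have spelled out precisely that argument, swapping in Lemma~\ref{lem:lem44} for Lemma~\ref{lem:lem3} and the precedence constraints~(\ref{incoflow:e})--(\ref{incoflow:f}) for~(\ref{coflow:e}).
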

\begin{proof}
The proof is similar to that of theorem~\ref{thm:thm1}.
\end{proof}


\section{Single Network Core Scheduling Problem}\label{sec:Algorithm3}
This section consider the single network core scheduling problem. The method is similar to the indivisible coflow scheduling problem in identical parallel networks.
We can formulate our problem as the following linear programming relaxation:

\begin{subequations}\label{incoflow:single:main}
\begin{align}
& \text{min}  && \sum_{k \in \mathcal{K}} w_{k} C_{k}     &   & \tag{\ref{incoflow:single:main}} \\
& \text{s.t.} && C_{k}\geq r_k+L_{ik}, && \forall k\in \mathcal{K}, \forall i\in \mathcal{I} \label{incoflow:single:a} \\
&             && C_{k}\geq r_k+L_{jk}, && \forall k\in \mathcal{K}, \forall j\in \mathcal{J} \label{incoflow:single:b} \\
&  && C_{k}\geq C_{k'}+L_{ik}, && \forall k, k'\in \mathcal{K}, \forall i\in \mathcal{I}: k'\prec k\label{incoflow:single:e} \\
&  && C_{k}\geq C_{k'}+L_{jk}, && \forall k, k'\in \mathcal{K}, \forall j\in \mathcal{J}: k'\prec k\label{incoflow:single:f} \\
&  && \sum_{k\in S}L_{ik}C_{k} \geq \frac{1}{2} \left(\sum_{k\in S} L_{ik}^2+\left(\sum_{k\in S} L_{ik}\right)^2\right),&& \forall i\in \mathcal{I}, \forall S\subseteq \mathcal{K} \label{incoflow:single:c} \\
&  && \sum_{k\in S}L_{jk}C_{k} \geq \frac{1}{2} \left(\sum_{k\in S} L_{jk}^2+\left(\sum_{k\in S} L_{jk}\right)^2\right),&& \forall j\in \mathcal{J}, \forall S\subseteq \mathcal{K} \label{incoflow:single:d} 
\end{align}
\end{subequations}

The constraints~(\ref{incoflow:single:a}) and (\ref{incoflow:b}) ensure that the completion time of any coflow $k$ is at least its release time $r_k$ plus its load. 
The constraints~(\ref{incoflow:single:e}) and (\ref{incoflow:single:f}) are the precedence constraints.
The constraints~(\ref{incoflow:single:c}) and (\ref{incoflow:single:d}) are used to lower bound the completion time variable in the input port and the output port, respectively. 

The algorithm is modified from Shafiee and Ghaderi's algorithm~\cite{shafiee2018improved}. Our algorithm single-network-core-list-scheduling (described in Algorithm~\ref{Alg3}) is as follows. Given $n$ flows from all coflows in the coflow set $\mathcal{K}$, we have an optimal solution $\bar{C}_1, \ldots, \bar{C}_n$ from the linear program (\ref{incoflow:single:main}). Without loss of generality, we assume $\bar{C}_{1}\leq \cdots\leq \bar{C}_{n}$ and schedule the flows iteratively in the order of this list. 

\begin{algorithm}
\caption{single-network-core-list-scheduling}
    \begin{algorithmic}[1]
		    \REQUIRE a vector $\bar{C}\in \mathbb{R}_{\scriptscriptstyle \geq 0}^{n}$ used to decide the order of scheduling
				    \STATE wait until the first coflow is released
						\WHILE{there is some incomplete flow}
               \FOR{every ready, released and incomplete flow $f=(i, j, k)$ in non-decreasing order of $\bar{C}_f$, breaking ties arbitrarily}
									\IF{the link $(i, j)$ is idle}
									    \STATE schedule flow $f$
									\ENDIF
							\ENDFOR
							\WHILE{no new flow is ready, completed or released}
							    \STATE transmit the flows that get scheduled in line 5 at maximum rate 1.
							\ENDWHILE
					\ENDWHILE
   \end{algorithmic}
\label{Alg3}
\end{algorithm}

\subsection{Analysis}

Based on the analysis in \cite{Leslie1997}, we have the following two lemmas: 
\begin{lem}\label{lem3:lem11}
For the $i$-th input with $n$ coflows, let $C_{1}, \ldots, C_{n}$ satisfy (\ref{incoflow:single:c}) and assume without loss of generality that $C_{1}\leq \cdots\leq C_{n}$. Then, for each $k=1,\ldots, n$, if $S=\left\{1, \ldots, k\right\}$,
\begin{eqnarray*}
C_{k}\geq \frac{1}{2} \sum_{f\in S} L_{if}.
\end{eqnarray*}
\end{lem}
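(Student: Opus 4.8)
The plan is to run the standard ``scheduling inequality'' argument used in \cite{Leslie1997}. First I would instantiate the parametric inequality~(\ref{incoflow:single:c}) at the particular set $S=\{1,\ldots,k\}$; this is legitimate because~(\ref{incoflow:single:c}) is asserted for \emph{every} $S\subseteq\mathcal{K}$. This yields
\[
\sum_{f\in S}L_{if}C_{f}\;\geq\;\frac{1}{2}\left(\sum_{f\in S}L_{if}^{2}+\Big(\sum_{f\in S}L_{if}\Big)^{2}\right).
\]

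Next I would weaken the right-hand side by dropping the nonnegative term $\sum_{f\in S}L_{if}^{2}$, so that $\sum_{f\in S}L_{if}C_{f}\geq\frac{1}{2}(\sum_{f\in S}L_{if})^{2}$. For the left-hand side I would use the ordering hypothesis $C_{1}\leq\cdots\leq C_{n}$: since every index $f\in S=\{1,\ldots,k\}$ satisfies $C_{f}\leq C_{k}$, we get $\sum_{f\in S}L_{if}C_{f}\leq C_{k}\sum_{f\in S}L_{if}$. Chaining the two estimates gives $C_{k}\sum_{f\in S}L_{if}\geq\frac{1}{2}(\sum_{f\in S}L_{if})^{2}$.

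To finish, I would divide by $\sum_{f\in S}L_{if}$ when this quantity is positive, obtaining $C_{k}\geq\frac{1}{2}\sum_{f\in S}L_{if}$ as claimed; and when $\sum_{f\in S}L_{if}=0$ the inequality holds trivially since $C_{k}\geq 0$. The only step that needs a moment of care is this final division, which is why I treat the degenerate all-zero case separately; otherwise the argument is entirely routine. The companion statement for the output ports follows verbatim, replacing $L_{if}$ by $L_{jf}$ and invoking~(\ref{incoflow:single:d}) in place of~(\ref{incoflow:single:c}).
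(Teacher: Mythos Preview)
Your proposal is correct and follows exactly the paper's own argument: apply~(\ref{incoflow:single:c}) to $S=\{1,\ldots,k\}$, bound the left side using $C_f\le C_k$, drop the $\sum L_{if}^2$ term on the right, and divide through. The only difference is that you explicitly handle the degenerate case $\sum_{f\in S}L_{if}=0$, which the paper omits.
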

\begin{proof}
According to (\ref{incoflow:single:c}) and the fact that $C_{f}\leq C_{k}$ for all $f\in S$, we have:
\begin{eqnarray*}
C_{k}\sum_{f\in S}L_{if} \geq \sum_{f\in S}L_{if}C_f \geq \frac{1}{2} \left(\sum_{f\in S} L_{if}^2+\left(\sum_{f\in S} L_{if}\right)^2\right) \geq \frac{1}{2} \left(\sum_{f\in S} L_{if}\right)^2.
\end{eqnarray*}
The following inequality can be obtained:
\begin{eqnarray*}
C_{k}\geq \frac{1}{2} \sum_{f\in S} L_{if}.
\end{eqnarray*}
\end{proof}

\begin{lem}\label{lem3:lem22}
For the $j$-th output with $n$ coflows, let $C_{1}, \ldots, C_{n}$ satisfy (\ref{incoflow:single:d}) and assume without loss of generality that $C_{1}\leq \cdots\leq C_{n}$. Then, for each $k=1,\ldots, n$, if $S=\left\{1, \ldots, k\right\}$,
\begin{eqnarray*}
C_{k}\geq \frac{1}{2} \sum_{f\in S} L_{jf}.
\end{eqnarray*}
\end{lem}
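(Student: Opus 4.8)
The plan is to mirror the proof of Lemma~\ref{lem3:lem11} verbatim, interchanging the role of the $i$-th input port with the $j$-th output port and using constraint~(\ref{incoflow:single:d}) in place of constraint~(\ref{incoflow:single:c}). Concretely, I would fix $k$, set $S=\{1,\ldots,k\}$, and start from the output-port load constraint
\begin{eqnarray*}
\sum_{f\in S}L_{jf}C_{f}\geq \frac{1}{2}\left(\sum_{f\in S}L_{jf}^{2}+\Bigl(\sum_{f\in S}L_{jf}\Bigr)^{2}\right).
\end{eqnarray*}

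Next I would exploit the ordering assumption $C_{1}\leq\cdots\leq C_{n}$, which gives $C_{f}\leq C_{k}$ for every $f\in S$, hence $C_{k}\sum_{f\in S}L_{jf}\geq \sum_{f\in S}L_{jf}C_{f}$. Combining this with the displayed lower bound and then dropping the nonnegative term $\tfrac12\sum_{f\in S}L_{jf}^{2}$ yields
\begin{eqnarray*}
C_{k}\sum_{f\in S}L_{jf}\geq \frac{1}{2}\Bigl(\sum_{f\in S}L_{jf}\Bigr)^{2}.
\end{eqnarray*}
Dividing both sides by $\sum_{f\in S}L_{jf}$ gives the claimed inequality $C_{k}\geq \tfrac12\sum_{f\in S}L_{jf}$.

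There is essentially no obstacle here; the argument is a routine transcription of the preceding lemma's proof. The only point needing a word of care is the degenerate case $\sum_{f\in S}L_{jf}=0$, in which the asserted bound $C_{k}\geq 0$ holds trivially (the $C_k$ in the LP relaxation are nonnegative), so the division step is only performed when the sum is strictly positive. I would therefore present the proof in two lines exactly as in Lemma~\ref{lem3:lem11}, noting this trivial case only if the referee insists on full rigor.
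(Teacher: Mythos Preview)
Your proposal is correct and follows exactly the approach the paper intends: the paper's own proof of this lemma simply states that it is similar to that of Lemma~\ref{lem3:lem11}, and your argument is precisely that transcription with output-port loads $L_{jf}$ and constraint~(\ref{incoflow:single:d}) in place of $L_{if}$ and~(\ref{incoflow:single:c}). The brief remark on the degenerate case $\sum_{f\in S}L_{jf}=0$ is a nice touch but not essential.
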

\begin{proof}
The proof is similar to that of lemma~\ref{lem3:lem11}.
\end{proof}

To find the bound of time interval length $\hat{C}_{k}$, we omit the precedence constraints and schedule the coflow.
We have the following lemma:
\begin{lem}\label{lem3:lem33}
For each coflow $k=1, \ldots, n$,
\begin{eqnarray*}
\hat{C}_{k}\leq 5\bar{C}_{k}.
\end{eqnarray*}
\end{lem}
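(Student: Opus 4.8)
The plan is to prove the equivalent statement that, when the precedence constraints are dropped so that each coflow is ready at its release time, Algorithm~\ref{Alg3} completes every flow of coflow $k$ by time $5\bar{C}_k$; since a ready time is nonnegative, $\hat{C}_k$ is then at most that completion time. Because the relaxation~(\ref{incoflow:single:main}) carries only the per-coflow variables $C_k$, every flow $f=(i,j,k)$ is ordered in Algorithm~\ref{Alg3} by the value $\bar{C}_f=\bar{C}_k$ of its coflow, so any flow processed before (or tied with) a flow of coflow $k$ belongs to a coflow in $S:=\{1,\dots,k\}$. Writing $\tilde{C}_{ijk}$ for the completion time of flow $(i,j,k)$ produced by the algorithm, the intermediate target is the per-flow bound: for every flow $(i,j,k)$ of coflow $k$,
\begin{eqnarray*}
\tilde{C}_{ijk}\ \le\ r_k+\sum_{k'\in S}L_{ik'}+\sum_{k'\in S}L_{jk'}.
\end{eqnarray*}

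I would establish this by a busy/blocked charging argument on the \emph{while}-loop of Algorithm~\ref{Alg3}. Fix a flow $(i,j,k)$ and a time $t$ with $r_k\le t<\tilde{C}_{ijk}$. With precedence omitted this flow is ready, released and incomplete at $t$, hence it is processed in the \emph{for}-loop of the round containing $t$; if it is not transmitted during that round, then at the moment it was reached the link $(i,j)$ was not idle, so a flow processed earlier in the same round --- necessarily of key value at most $\bar{C}_{ijk}$ --- occupies input $i$ or output $j$ and holds that port for the entire round. Therefore $[r_k,\tilde{C}_{ijk})$ is covered by: the time $(i,j,k)$ itself transmits, which totals $d_{ijk}$; the time input $i$ is held by some \emph{other} flow of key value at most $\bar{C}_{ijk}$; and the time output $j$ is held by some such flow. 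Each flow holds a port for a total of at most its own size, and every flow of key value at most $\bar{C}_{ijk}$ that uses input $i$ comes from a coflow in $S$, so the first two of these lengths sum to at most $\sum_{k'\in S}L_{ik'}$; symmetrically the first and the third sum to at most $\sum_{k'\in S}L_{jk'}$. Adding and deleting the single double-counted copy of $d_{ijk}$ gives the displayed per-flow bound.

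To finish, let $(i^{\ast},j^{\ast},k)$ be the flow of coflow $k$ that completes last, so that $\tilde{C}_k=\tilde{C}_{i^{\ast}j^{\ast}k}\le r_k+\sum_{k'\in S}L_{i^{\ast}k'}+\sum_{k'\in S}L_{j^{\ast}k'}$. Lemma~\ref{lem3:lem11} applied to input $i^{\ast}$ with $S=\{1,\dots,k\}$ gives $\sum_{k'\in S}L_{i^{\ast}k'}\le 2\bar{C}_k$, Lemma~\ref{lem3:lem22} applied to output $j^{\ast}$ gives $\sum_{k'\in S}L_{j^{\ast}k'}\le 2\bar{C}_k$, and constraint~(\ref{incoflow:single:a}) gives $r_k\le\bar{C}_k$. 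Hence $\hat{C}_k\le\tilde{C}_k\le\bar{C}_k+2\bar{C}_k+2\bar{C}_k=5\bar{C}_k$, as claimed.

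The effort is concentrated in the second paragraph: making the busy/blocked argument rigorous against the round structure of Algorithm~\ref{Alg3} in the presence of arbitrary release times --- in particular, that after $r_k$ the link $(i,j)$ is never idle while the flow is incomplete, that a blocker really occupies its port for the entire round, and that every blocker has key value at most $\bar{C}_{ijk}$ so the charges remain inside $\sum_{k'\in S}L_{ik'}$ and $\sum_{k'\in S}L_{jk'}$. Once that is in place the remainder is only the routine combination of Lemmas~\ref{lem3:lem11} and~\ref{lem3:lem22} with the release-time constraint, paralleling the single-switch analysis of~\cite{shafiee2018improved}.
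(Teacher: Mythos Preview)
Your proposal is correct and follows essentially the same route as the paper's proof: both fix the last-completing flow $(i,j,k)$ of coflow $k$, use a busy-port argument to bound the algorithm's completion time by a release-time term plus $\sum_{k'\in S}L_{ik'}+\sum_{k'\in S}L_{jk'}$ with $S=\{1,\dots,k\}$, and then invoke Lemmas~\ref{lem3:lem11} and~\ref{lem3:lem22} together with the release-time constraint to reach $5\bar{C}_k$. The only cosmetic difference is that the paper starts the busy interval at $r_{\max}(S)=\max_{f\in S}r_f$ and bounds it by $\bar{C}_k$ via $\bar{C}_k\ge\bar{C}_f\ge r_f$ for $f\in S$, whereas you start at $r_k$ and bound it by $\bar{C}_k$ via constraint~(\ref{incoflow:single:a}); your starting point is (harmlessly) tighter since $r_k\le r_{\max}(S)$, but the resulting inequality is the same.
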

\begin{proof}
Assume the last completed flow in coflow $k$ is $(i, j, k)$. Let $S = \left\{1, \ldots, k\right\}$ and $r_{max}(S)=\max_{f\in S} r_f$. Consider the schedule induced by the coflows $S$. Since all links $(i, j)$ in the network cores are busy from $r_{max}(S)$ to the start of the last completed flow in coflow $k$, we have
\begin{eqnarray}
\hat{C}_{k}   & \leq & r_{max}(S)  + \sum_{f\in S\setminus \left\{k\right\}} \left(L_{if}+L_{jf}\right)+L_{ik}+L_{jk} \label{lem3:eq1}\\
              & \leq & \bar{C}_{k} + \sum_{f\in S\setminus \left\{k\right\}} \left(L_{if}+L_{jf}\right)+L_{ik}+L_{jk} \label{lem3:eq2}\\
              & =    & \bar{C}_{k} + \sum_{f\in S} \left(L_{if}+L_{jf}\right) \label{lem3:eq3}\\
							& \leq & 5\bar{C}_{k} \label{lem3:eq4}
\end{eqnarray}
The inequality~(\ref{lem3:eq2}) is due to $\bar{C}_{k}\geq \bar{C}_{f}$ for all $f\in S$, we have $\bar{C}_{k}\geq r_{max}(S)$. The equation~(\ref{lem3:eq3}) shifts the coflow $k$ into the second and third terms. The inequality~(\ref{lem3:eq4}) is based on lemma~\ref{lem3:lem11} and lemma~\ref{lem3:lem22}.
\end{proof}

According to lemma~\ref{lem3:lem33}, we have the following theorem:
\begin{thm}\label{thm3:thm11}
The single-network-core-list-scheduling has an approximation ratio of, at most, $5\mu$.
\end{thm}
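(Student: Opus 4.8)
The plan is to follow exactly the template of Theorem~\ref{thm:thm1}, using Lemma~\ref{lem3:lem33} in place of Lemma~\ref{lem:lem3}. The single-network-core setting reuses the same two-phase reasoning: first bound the time each coflow spends between its ready time and its completion (this is $\hat{C}_k$, already controlled by Lemma~\ref{lem3:lem33}), then telescope along the longest chain in the precedence DAG. So the argument has essentially no new content beyond assembling the pieces.

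Concretely, I would fix a coflow $k$ and let $v_1 v_2 \cdots v_f$ be a longest path in $G$ ending at $v_f = k$, so that $f \le \mu$. In Algorithm~\ref{Alg3} a flow becomes \emph{ready} only once all predecessor coflows are completed, so coflow $v_q$ cannot begin transmission before coflow $v_{q-1}$ finishes; hence $\tilde{C}_k \le \sum_{q=1}^{f} \hat{C}_{v_q}$. Applying Lemma~\ref{lem3:lem33} gives $\tilde{C}_k \le \sum_{q=1}^{f} 5\bar{C}_{v_q}$. Then the precedence constraints~(\ref{incoflow:single:e})–(\ref{incoflow:single:f}) of the LP force $\bar{C}_{v_q} \le \bar{C}_{v_f} = \bar{C}_k$ for every $q$ along the chain (since $L_{ik}, L_{jk} \ge 0$), so $\tilde{C}_k \le 5 f \bar{C}_k \le 5\mu \bar{C}_k$. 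Summing $w_k \tilde{C}_k$ over $k$ and comparing against the LP optimum $\sum_k w_k \bar{C}_k \le \mathrm{OPT}$ yields the claimed $5\mu$ approximation ratio.

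There is no real obstacle here; the only point worth stating carefully is the first inequality $\tilde{C}_k \le \sum_{q=1}^{f} \hat{C}_{v_q}$, namely that the total elapsed time until $k$ completes is at most the sum, over the longest chain, of the per-coflow ready-to-completion intervals. This relies on the fact that Lemma~\ref{lem3:lem33} bounds $\hat{C}_{v_q}$ even when $v_q$'s predecessors have already consumed link capacity — but the lemma's proof only charges the interval after $r_{max}(S)$ and the per-coflow loads $L_{if} + L_{jf}$ for $f \le v_q$ in the $\bar C$-order, which remains valid irrespective of what happened earlier, because the argument is applied freshly to the sub-instance induced by coflows with index at most $v_q$ scheduled from their common ready time onward. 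Everything else is the same telescoping calculation as in~(\ref{lem4:eq1})–(\ref{lem4:eq5}), so I would simply write ``the proof is analogous to that of Theorem~\ref{thm:thm1}, using Lemma~\ref{lem3:lem33}.''
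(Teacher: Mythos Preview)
Your proposal is correct and matches the paper's own argument exactly: the paper simply says ``the proof is similar to that of Theorem~\ref{thm:thm1},'' and you have spelled out precisely that telescoping along the longest chain using Lemma~\ref{lem3:lem33} in place of Lemma~\ref{lem:lem3} and the precedence constraints~(\ref{incoflow:single:e})--(\ref{incoflow:single:f}) in place of~(\ref{coflow:e}). Your additional remark about why the bound on $\hat{C}_{v_q}$ remains valid once predecessors have consumed capacity is more justification than the paper provides, but it is consistent with the intended reasoning.
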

\begin{proof}
The proof is similar to that of theorem~\ref{thm:thm1}.
\end{proof}

We also have the following lemma:
\begin{lem}\label{lem3:lem44}
For each coflow $k=1, \ldots, n$,
\begin{eqnarray*}
\hat{C}_{k}\leq 4\bar{C}_{k}.
\end{eqnarray*}
when all coflows are released at time zero.
\end{lem}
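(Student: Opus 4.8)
The plan is to mimic the proof of Lemma~\ref{lem3:lem33} almost verbatim, the only change being that the release-time term vanishes. First I would fix a coflow $k$ and let $(i,j,k)$ be the last flow of coflow $k$ to finish in the schedule produced by single-network-core-list-scheduling when the precedence constraints are omitted. Setting $S=\{1,\ldots,k\}$, I would argue, exactly as before, that because the inner \textbf{for} loop of Algorithm~\ref{Alg3} processes flows in non-decreasing order of $\bar{C}$, no flow outside $S$ can occupy link $(i,j)$ while an unfinished flow of $S$ needs that link; hence the input port $i$ together with the output port $j$ is never simultaneously idle before $(i,j,k)$ completes. Since every coflow is now released at time $0$, the set $S$ is available from the start and there is no initial waiting, so the finishing time of $(i,j,k)$ is bounded by the total volume that must traverse port $i$ or port $j$ among the coflows of $S$:
\[
\hat{C}_{k} \;\le\; \sum_{f\in S\setminus\{k\}}\left(L_{if}+L_{jf}\right)+L_{ik}+L_{jk} \;=\; \sum_{f\in S}\left(L_{if}+L_{jf}\right).
\]

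Next I would invoke Lemma~\ref{lem3:lem11} and Lemma~\ref{lem3:lem22} with this same set $S$; using that the optimal LP solution $\bar{C}$ satisfies (\ref{incoflow:single:c}) and (\ref{incoflow:single:d}) and that $\bar{C}_1\le\cdots\le\bar{C}_n$, these give $\sum_{f\in S}L_{if}\le 2\bar{C}_k$ and $\sum_{f\in S}L_{jf}\le 2\bar{C}_k$. Adding the two bounds yields $\hat{C}_k\le 2\bar{C}_k+2\bar{C}_k=4\bar{C}_k$, which is the claim. (This lemma would then feed, via an argument identical to the proof of Theorem~\ref{thm:thm1}, into the $4\mu$-approximation bound for the zero-release-time single-core case.)

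I do not expect a substantive obstacle here, since the statement is a direct specialization of Lemma~\ref{lem3:lem33} with $r_{max}(S)=0$. The one point that warrants care is the justification of the "no idling" inequality above: one must verify that during every maximal busy stretch before $(i,j,k)$ finishes, the data pushed through port $i$ or port $j$ is charged only to flows of $S$, and that the non-decreasing-$\bar{C}$ ordering indeed prevents a flow with larger $\bar{C}$ from blocking link $(i,j)$ against a flow of $S$. This is exactly the reasoning already used in Lemma~\ref{lem3:lem33}, so it carries over unchanged.
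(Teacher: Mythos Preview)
Your proposal is correct and follows exactly the approach the paper intends: the paper's own proof simply states ``The proof is similar to that of lemma~\ref{lem3:lem33},'' and your argument is precisely that specialization, dropping the $r_{\max}(S)$ term and applying Lemmas~\ref{lem3:lem11} and~\ref{lem3:lem22} to obtain $4\bar{C}_k$ instead of $5\bar{C}_k$.
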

\begin{proof}
The proof is similar to that of lemma~\ref{lem3:lem33}.
\end{proof}

According to lemma~\ref{lem3:lem44}, we have the following theorem:
\begin{thm}\label{thm3:thm22}
For the special case when all coflows are released at time zero, the single-network-core-list-scheduling has an approximation ratio of, at most, $4\mu$.
\end{thm}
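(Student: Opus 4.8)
The plan is to replay the argument of Theorem~\ref{thm:thm1} verbatim in structure, substituting the zero-release-time per-coflow bound of Lemma~\ref{lem3:lem44} for the bound of Lemma~\ref{lem:lem3}, and using the precedence constraints of the single-core LP~(\ref{incoflow:single:main}) in place of constraint~(\ref{coflow:e}). Concretely, I would first fix a coflow $k$ and let $v_1 v_2 \cdots v_f$ with $v_f = k$ be a longest path in the precedence DAG ending at $k$. Since no flow of a coflow may be scheduled until \emph{all} its predecessors are fully transmitted, the completion time produced by single-network-core-list-scheduling satisfies the telescoping bound $\tilde{C}_k \le \sum_{q=1}^{f}\hat{C}_{v_q}$: each $\hat{C}_{v_q}$ is, by definition, the length of the interval between the ready time of $v_q$ and its completion, and stacking these intervals along the chain dominates the actual finish time of $k$.

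Next I would apply Lemma~\ref{lem3:lem44} termwise to obtain $\tilde{C}_k \le \sum_{q=1}^{f} 4\bar{C}_{v_q}$, then invoke the precedence constraints~(\ref{incoflow:single:e}) and~(\ref{incoflow:single:f}) of LP~(\ref{incoflow:single:main}) — which force $\bar{C}_{v_q} \le \bar{C}_{v_{q+1}} \le \cdots \le \bar{C}_k$ along the chain — to replace every $\bar{C}_{v_q}$ by $\bar{C}_k$, giving $\tilde{C}_k \le 4f\bar{C}_k$. Finally, $f \le \mu$ because $\mu$ is the coflow count of the longest path in the DAG, so $\tilde{C}_k \le 4\mu\,\bar{C}_k$. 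Multiplying by $w_k$, summing over $k$, and using that $\sum_k w_k \bar{C}_k$ is a lower bound on the optimum (LP relaxation) yields $\sum_k w_k \tilde{C}_k \le 4\mu \sum_k w_k \bar{C}_k \le 4\mu\cdot\mathrm{OPT}$, which is the claimed ratio.

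The only step that is not a pure substitution is justifying $\tilde{C}_k \le \sum_{q=1}^{f}\hat{C}_{v_q}$ together with the applicability of Lemma~\ref{lem3:lem44} to each $\hat{C}_{v_q}$. The point is that once every predecessor of $v_q$ has finished, the remaining behavior of the algorithm on $v_q$ (and the other still-unstarted coflows) is exactly a run of the precedence-free schedule analyzed in Lemma~\ref{lem3:lem44} with the clock reset to the ready time of $v_q$; hence the $4$-factor bound on the ready-to-completion gap carries over unchanged, and the gaps compose additively along the path. Since the release times are all zero, the $r_{max}(S)$ term in the proof of Lemma~\ref{lem3:lem33}/\ref{lem3:lem44} vanishes, which is precisely what drops the constant from $5$ to $4$. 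Everything else is the routine chain of inequalities displayed in~(\ref{lem4:eq1})--(\ref{lem4:eq5}) with $6-\tfrac{2}{m}$ replaced by $4$, so I would simply write ``the proof is identical to that of Theorem~\ref{thm:thm1}, using Lemma~\ref{lem3:lem44} and constraints~(\ref{incoflow:single:e})--(\ref{incoflow:single:f})'' and, if a referee wants detail, expand the five-line computation.
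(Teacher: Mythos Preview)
Your proposal is correct and mirrors the paper's own proof exactly: the paper simply states that the proof is similar to that of Theorem~\ref{thm:thm1}, and your write-up is precisely that chain-of-inequalities argument with Lemma~\ref{lem3:lem44} in place of Lemma~\ref{lem:lem3} and constraints~(\ref{incoflow:single:e})--(\ref{incoflow:single:f}) in place of~(\ref{coflow:e}). Your added final sentence (multiplying by $w_k$, summing, and using that the LP optimum lower-bounds $\mathrm{OPT}$) makes explicit what the paper leaves implicit, but is otherwise identical in approach.
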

\begin{proof}
The proof is similar to that of theorem~\ref{thm:thm1}.
\end{proof}

\section{Coflows of Multi-stage Jobs Scheduling Problem.}\label{sec:Algorithm4}
This section considers the coflows of multi-stage jobs scheduling problem. We can modify the linear programs~(\ref{coflow:main}), (\ref{incoflow:main}) and (\ref{incoflow:single:main}) to solve the problem of minimizing total weighted completion time and the problem of minimizing makespan. Then use the corresponding algorithm to schedule coflows. For example, we modify the linear programs~(\ref{coflow:main}). Let $\mathcal{T}$ be the set of jobs and $\mathcal{T}_{t}$ be the set of coflows belonging to the job $t$. We add an constraints~(\ref{job:coflow:b}) that ensures the completion time of any job is bounded by all its coflows. The goal is to minimize the total weighted completion time of a given set of multi-stage jobs. We can formulate our problem as the following linear programming relaxation:

\begin{subequations}\label{job:coflow:main}
\begin{align}
& \text{min}  && \sum_{t \in \mathcal{T}} w_{t} C_{t}     &   & \tag{\ref{job:coflow:main}} \\
& \text{s.t.} && (\ref{coflow:a})-(\ref{coflow:d}) &&  \notag \\
&  && C_{t} \geq C_{k},&& \forall t\in \mathcal{T}, \forall k\in \mathcal{T}_{t} \label{job:coflow:b} 
\end{align}
\end{subequations}

In the problem of minimizing makespan, we add an constraints~(\ref{job2:coflow:b}) that ensures the makespan is bounded by all coflows. We can formulate our problem as the following linear programming relaxation:

\begin{subequations}\label{job2:coflow:main}
\begin{align}
& \text{min}  && C_{max}     &   & \tag{\ref{job2:coflow:main}} \\
& \text{s.t.} && (\ref{coflow:a})-(\ref{coflow:d}) &&  \notag \\
&  && C_{max} \geq C_{k},&& \forall k\in \mathcal{K} \label{job2:coflow:b} 
\end{align}
\end{subequations}

We can analysis the last completed flow in each job and obtain the same result of previous analysis. Therefore, the following theorems can be obtained.
\begin{thm}\label{thm4:thm11}
The proposed algorithm has an approximation ratio of $O(\mu)$ for minimizing the total weighted completion time of a given set of multi-stage jobs.
\end{thm}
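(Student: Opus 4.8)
The plan is to reuse the divisible‑coflow machinery of Section~\ref{sec:Algorithm1} essentially verbatim, with jobs playing the role that individual coflows played there. First I would solve the linear program~(\ref{job:coflow:main}) to obtain an optimal fractional solution, writing $\bar{C}_k$ for the resulting coflow completion‑time values and $\bar{C}_t$ for the job completion‑time values. Since the objective of~(\ref{job:coflow:main}) is $\sum_{t\in\mathcal{T}} w_t C_t$ and every feasible integral schedule induces a feasible point of the LP, the LP optimum $\sum_{t\in\mathcal{T}} w_t \bar{C}_t$ is a lower bound on the optimal total weighted job completion time, i.e.\ on $\mathrm{OPT}$.

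Next I would run flow‑driven‑list‑scheduling (Algorithm~\ref{Alg1}) using the order given by the $\bar{C}_k$ values, exactly as in Section~\ref{sec:Algorithm1}. The algorithm never inspects the job variables, only the coflow ordering, and~(\ref{job:coflow:main}) contains all of the constraints~(\ref{coflow:a})--(\ref{coflow:d}); hence the projection of $\bar{C}$ onto the coflow coordinates is feasible for~(\ref{coflow:main}), and Lemma~\ref{lem:lem3} applies unchanged: for every coflow $k$ the time $\hat{C}_k$ it needs after all its DAG‑predecessors have finished is at most $(6-\tfrac{2}{m})\bar{C}_k$. The longest‑path argument of Theorem~\ref{thm:thm1} then yields $\tilde{C}_k \le \mu\,(6-\tfrac{2}{m})\,\bar{C}_k$ for the actual completion time of coflow $k$ in the schedule.

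Finally I would define the completion time of job $t$ to be $\tilde{C}_t=\max_{k\in\mathcal{T}_t}\tilde{C}_k$ — the job is done once its last coflow is done, and the last transmitted flow of the job belongs to some coflow $k^\ast\in\mathcal{T}_t$, so $\tilde{C}_t=\tilde{C}_{k^\ast}$. Combining the bound above with constraint~(\ref{job:coflow:b}), which gives $\bar{C}_{k^\ast}\le\bar{C}_t$,
\begin{equation*}
\tilde{C}_t=\tilde{C}_{k^\ast}\le \mu\Bigl(6-\tfrac{2}{m}\Bigr)\bar{C}_{k^\ast}\le \mu\Bigl(6-\tfrac{2}{m}\Bigr)\bar{C}_t .
\end{equation*}
Multiplying by $w_t$ and summing over $t\in\mathcal{T}$ bounds the algorithm's objective by $\mu(6-\tfrac{2}{m})\sum_{t}w_t\bar{C}_t\le \mu(6-\tfrac{2}{m})\,\mathrm{OPT}$, an $O(\mu)$ ratio; the indivisible and single‑core instantiations follow identically from Lemmas~\ref{lem:lem33} and~\ref{lem3:lem33} with their respective constants, and the makespan variant from~(\ref{job2:coflow:main}) is the same argument applied to the single job‑deadline variable.

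The step I expect to require the most care is the second one: verifying that none of the per‑coflow guarantees of Section~\ref{sec:Algorithm1} silently relied on properties of the \emph{old} LP that the augmented LP~(\ref{job:coflow:main}) might disturb. Concretely I would check that (i) the DAG $G=(\mathcal{K},E)$ and the notion of a coflow's ready time are untouched — the job layer only adds the bounding inequalities~(\ref{job:coflow:b}), not new precedences — so the inequality $\tilde C_k\le\sum_q \hat C_{v_q}$ along the longest path remains valid, and (ii) Lemma~\ref{lem:lem3} depends only on the constraints~(\ref{coflow:a})--(\ref{coflow:d}) together with the sorted order of $\bar C$, both of which are inherited by~(\ref{job:coflow:main}). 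Once this bookkeeping is in place, the weighted sum is a one‑line computation.
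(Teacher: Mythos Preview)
Your proposal is correct and follows precisely the approach the paper sketches: the paper gives no formal proof of this theorem beyond the one-line remark ``We can analysis the last completed flow in each job and obtain the same result of previous analysis,'' and your argument---reusing Lemma~\ref{lem:lem3} and the longest-path bound of Theorem~\ref{thm:thm1} on the coflow $k^{\ast}$ that finishes last in job $t$, then invoking constraint~(\ref{job:coflow:b}) to pass from $\bar{C}_{k^{\ast}}$ to $\bar{C}_t$---is exactly that. Your explicit check that~(\ref{job:coflow:main}) retains all of~(\ref{coflow:a})--(\ref{coflow:d}) and adds no new precedences is the right bookkeeping, and the summation step is immediate.
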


\begin{thm}\label{thm4:thm2}
The proposed algorithm has an approximation ratio of $O(\mu)$ for minimizing the makespan of a given set of multi-stage jobs.
\end{thm}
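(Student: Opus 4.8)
The plan is to reduce the makespan guarantee to the per-coflow completion-time bounds already in hand, in exactly the way Theorem~\ref{thm:thm1} reduces the weighted-completion-time guarantee to Lemma~\ref{lem:lem3}. First I would solve the linear program~(\ref{job2:coflow:main}), obtaining an optimal fractional solution of value $\bar{C}_{max}$ together with coflow completion times $\bar{C}_1 \le \cdots \le \bar{C}_n$. Since any feasible schedule for the makespan problem yields a feasible point of~(\ref{job2:coflow:main}) whose objective equals its makespan --- the congestion inequalities~(\ref{coflow:c})--(\ref{coflow:d}) hold for the true completion times by the same argument that justifies the original relaxation, and constraint~(\ref{job2:coflow:b}) merely records $C_{max} = \max_k C_k$ --- we get $\bar{C}_{max} \le C_{max}^{\mathrm{OPT}}$. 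I would then run the list-scheduling algorithm appropriate to the network model (Algorithm~\ref{Alg1}, \ref{Alg2}, or~\ref{Alg3}) in the order induced by $\bar{C}$.

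Next I would invoke the interval-length lemma matching the model --- Lemma~\ref{lem:lem3} in the divisible case, Lemma~\ref{lem:lem33} in the indivisible case, Lemma~\ref{lem3:lem33} for the single core --- to obtain $\hat{C}_k \le c\,\bar{C}_k$ for the corresponding constant $c$ (at most $6-\frac{2}{m}$, $4m+1$, or $5$ respectively, for arbitrary release times). Chaining these bounds along a longest path $v_1 v_2 \cdots v_f = k$ in the precedence DAG exactly as in the proof of Theorem~\ref{thm:thm1}, and using precedence constraint~(\ref{coflow:e}) to bound each $\bar{C}_{v_q} \le \bar{C}_k$, yields $\tilde{C}_k \le f\,c\,\bar{C}_k \le \mu\,c\,\bar{C}_k$ for every coflow $k$.

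Finally, because every coflow belongs to some job, the makespan produced by the algorithm is $C_{max}^{\mathrm{ALG}} = \max_{t\in\mathcal{T}} \max_{k\in\mathcal{T}_t} \tilde{C}_k = \max_{k\in\mathcal{K}} \tilde{C}_k$. Combining this with the bound of the previous paragraph, with constraint~(\ref{job2:coflow:b}) (which forces $\bar{C}_k \le \bar{C}_{max}$ for all $k$), and with $\bar{C}_{max} \le C_{max}^{\mathrm{OPT}}$ gives $C_{max}^{\mathrm{ALG}} \le \mu\,c\,\bar{C}_{max} \le \mu\,c\,C_{max}^{\mathrm{OPT}}$, an $O(\mu)$ approximation.

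The one point genuinely distinct from the weighted-completion-time theorems is confirming that~(\ref{job2:coflow:main}) is a valid relaxation whose optimum lower-bounds the true makespan; this is routine since it reuses the validity of~(\ref{coflow:c})--(\ref{coflow:d}) for actual schedules. Everything else is a verbatim replay of the longest-path argument, so I expect no real obstacle --- the only care needed is bookkeeping the constant $c$ uniformly across the three network models (and in the zero-release-time variants, replacing it by $5-\frac{2}{m}$, $4m$, or $4$) so that the single statement ``$O(\mu)$'' is justified in all cases.
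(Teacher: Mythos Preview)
Your proposal is correct and follows essentially the same approach as the paper, which does not give a standalone proof but simply states that analyzing the last completed flow in each job yields the same result as the previous analysis. You have spelled out exactly that argument: the LP~(\ref{job2:coflow:main}) lower-bounds the optimal makespan, the per-coflow interval bounds combined with the longest-path chaining of Theorem~\ref{thm:thm1} give $\tilde{C}_k \le c\mu\,\bar{C}_k \le c\mu\,\bar{C}_{max}$, and taking the maximum over all coflows yields the $O(\mu)$ makespan guarantee.
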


\section{Concluding Remarks}\label{sec:Conclusion}
This paper first develops two polynomial-time approximation algorithms to solve the coflow scheduling problem with precedence constraints in identical parallel networks. In the divisible coflow scheduling problem, the proposed algorithm achieves $(6-\frac{2}{m})\mu$ and $(5-\frac{2}{m})\mu$ approximate ratios for arbitrary release time and zero release time, respectively. In the indivisible coflow scheduling problem, the proposed algorithm achieves $(4m+1)\mu$ and $4m\mu$ approximate ratios for arbitrary release time and zero release time, respectively. This paper then considers the single network core scheduling problem. In single network core, this paper proposes an algorithm achieves $5\mu$ and $4\mu$ approximate ratios for arbitrary release time and zero release time, respectively.
This result represents an improvement upon the previous best approximation ratio of $O(\tilde{\mu} \log(N)/ \log(\log(N)))$ where $\tilde{\mu}$ is the maximum number of coflows in a job and $N$ is the number of servers.

\end{document}